   \newtheorem{theorem}{Theorem}[section]
   \newtheorem{lemma}[theorem]{Lemma}
   \newtheorem{corollary}[theorem]{Corollary}
\newcommand{\be}{\begin{equation}} 
\newcommand{\ee}{\end{equation}}
\newcommand{\beq}{\begin{eqnarray}}
\newcommand{\eeq}{\end{eqnarray}}
\newcommand{\bt}{\begin{theorem}}
\newcommand{\et}{\end{theorem}}
\newcommand{\bl}{\begin{lemma}}
\newcommand{\el}{\end{lemma}}
\newcommand{\bc}{\begin{corollary}}
\newcommand{\ec}{\end{corollary}}
\newcommand{\bp}{\begin{prop}}
\newcommand{\ep}{\end{prop}}
\newcommand{\ba}{\begin{array}}
\newcommand{\ea}{\end{array}}
\newcommand{\comm}[1]{}
\newcommand{\la}{\label}
\newcommand{\ci}{\cite}
\newcommand \qed {\hskip 6pt\vrule height6pt width5pt depth1pt \bigskip}
\newcommand{\lb}{\lambda}
\newfont{\msbm}{msbm10 scaled\magstep1}
\newfont{\msbms}{msbm7 scaled\magstep1} 
\newcommand{\bbr}{\mbox{$\mbox{\msbm R}$}}
   \newenvironment{proof}[1][Proof]{\begin{trivlist}
   \item[\hskip \labelsep {\bfseries #1}]}{\end{trivlist}}
   \newenvironment{remark}[1][Remark]{\begin{trivlist}
   \item[\hskip \labelsep {\bfseries #1}]}{\end{trivlist}}
   \numberwithin{equation}{section}
\newcommand{\zee}{\mathbb{Z}}       
\newcommand{\real}{\mathbb{R}} 
\newcommand{\nn}{\mathbb{N}}  %
\newcommand{\cx}{\mathbb{C}} 
\newcommand{\de}{\mathrm{d}} 
\newcommand{\tor}{\mathbb{T}}
\newcommand{\slr}{  \textrm{ {\bf SL}}_2 (\mathbb{R})  }
\newcommand{\ly}{\mathcal{L}}
\newcommand{\ham}{H}
\begin{document}
\title{Continuity of the measure of the spectrum for quasiperiodic
            Schr\"{o}dinger operators with rough potentials}
\author{Svetlana Jitomirskaya
 and Rajinder Mavi\thanks{The work was supported by NSF Grant DMS-1101578.}
}

\maketitle
%

\begin{abstract}
We study discrete quasiperiodic  Schr\"odinger operators on $\ell^2(\zee)$ with potentials defined by $\gamma$-H\"older functions. We prove a general statement  that  for $\gamma >1/2$  and under the condition of positive Lyapunov exponents, measure of the spectrum at irrational frequencies is the limit of measures of spectra of periodic approximants. 
An important ingredient in our analysis is  a general result on uniformity of the upper Lyapunov exponent of strictly ergodic cocycles.
\end{abstract}

\section{Introduction} Consider quasiperiodic  operators acting on $l^2(\zee)$ and given by:
\begin{equation}\label{schrodinger1}
(\ham_{\omega,\theta}\psi)(n)=
\psi(n-1)+\psi(n+1)+f(\omega n+\theta)\psi(n),
\qquad n=\dots,-1,0,1,\dots,
\end{equation}
where $f(x)$ is a real-valued sampling function of period 1. 
Denote by $S(\omega,\theta)$ the spectrum of $\ham_{\omega,\theta}.$ 
For rational $\alpha=p/q$ the spectrum 
consists of at most $q$ intervals. Let $S(\omega)= \bigcup_{\theta\in\bbr}
S\left(\omega,\theta\right).$  Note that for irrational $\omega$
the spectrum of $H$ (as a set) is independent of $\theta$ (see, e.g., 
\ci{cfks}), and therefore $S(\omega,\theta)=S(\omega).$ In this paper 
we study continuity of $S(\omega)$ and its
measure upon rational approximation of $\omega,$  for rough sampling functions $f.$ 

The last decade has seen an explosion of general results for operators
(\ref{schrodinger1}) with analytic $f,$ see
e.g. \cite{BourgainGreen,J} and references therein, and by now even
the global theory of such operators is well developed
\cite{Aglobal1,Aglobal2}.  There are very few complete results,
however,  beyond the analytic category. Indeed, not only the methods
of the mentioned papers intrinsically require analyticity or at least
Gevrey regularity (e.g. the large deviation theorems), but it is essential for some results too. For example, continuity of the Lyapunov exponent \ci{BJ02},  an important ingredient of many later developments, 
 may not hold in the case of even $C^\infty$ regularity \cite{WY2012}
 (see also \ci{JM1}). The surprising counterexample in \cite{WY2012}
 has made it
 natural to conjecture that (near) analyticity is essential for many
 other general properties of  quasiperiodic potentials: positive
 Lyapunov exponents at high couplings, localization in the regime of
 positive Lyapunov exponents, finiteness of transitions between
 supercritical and subcritical regimes, almost reducibility
 conjecture, etc. This paper presents a result in the opposite direction. 
We show that, under certain conditions, for the fundamental question
of continuity in $\omega$ (previously established under the
analyticity condition) not only analyticity is not essential, but
such continuity always holds even at surprisingly weak
regularity. Namely, in the regime of positive Lyapunov exponents,
spectra of rational approximants converge a.e. to $S(\omega)$  for all
$f,$ with H\"older-1/2+ continuity. To our knowledge, other than the
very basic facts that require, at most, continuity of $f,$  there are
no other results that do not require exclusion of potentially relevant
parameteres or additional assumptions (e.g. transversality) and work
for potentials that rough, and the fact that one can even go beyond
the Lipshitz condition has been a surprise to the authors. Moreover, we have reasons to expect that our condition is
optimal as far as H\"older regularity go.

The fact that various quantities could be easier to analyze and sometimes 
are even computable
for periodic operators, $H_{p/q,\theta}$,  makes results on continuity in 
$\omega$ particularly important.  For example, the famous Hofstadter butterfly \cite{H1976} is a plot of the almost Mathieu spectra for 50 rational values of $\omega$ and visually based inferences about the spectrum for irrational $\omega$ implicitly assume continuity. It is therefore an important and natural question if and in what sense the spectral properties of such rational approximants relate to those of the quasi-periodic operator $H_{\omega,\theta}.$

The history of this question was centered around the Aubry conjecture on the measure of the 
spectrum \ci{AA1980}, popularized by B. Simon \cite{Simon84,Simon00} : that for the almost Mathieu operator given by
(\ref{schrodinger1}) with $f(\theta)=2\lb\cos2\pi\theta$, for irrational $\omega$ and 
all real $\lb,\theta$ there is an equality
\be
\la{aa}
|S_\lambda(\omega,\theta)|=4|1-|\lb||.
\ee
Here, for sets, we use $|\cdot |$ to denote the Lebesgue measure. 
Avron,van Mouche, Simon \ci{AMS1990} proved that, for $|\lb| \not= 1,$ 
$|S_\lambda(p_n/q_n)|\to 4|1-|\lb||$ as $q_n\to \infty$, and
Last \ci{L1994} established this fact for $|\lb| = 1$. Given these theorems, 
the proof of the Aubry-Andre conjecture was reduced to a
continuity result.
   
   The continuity in $\omega$ of $S(\omega)$ in Hausdorff metric was
   proven in \ci{AS1983,El82}, requiring only the continuity of $f.$
Continuity of the measure of the spectrum is  a more delicate issue, since, 
in particular, $|S(\omega)|$ can be (and is, for the almost Mathieu operator)  discontinuous at
rational $\omega.$ We will actually use an even stronger notion of a.e. setwise continuity. Namely, we say  $\lim_{n\to \infty} B_n  = B$ if and only if 

\begin{equation} \label{eq_def_topology}
\limsup_{n \to \infty} B_n = \liminf_{n \to \infty} B_n = B 
\Longleftrightarrow  \lim_{n \to \infty} \chi_{B_n} 
 = \chi_B ~\mbox{Lebesgue a.e. }
\end{equation}
.

Establishing continuity at irrational $\omega$ requires 
quantitative estimates on the Hausdorff continuity of the
spectrum. The first such result, namely the H\"older-$\frac{1}{3}$ 
continuity was proved in
\ci{CEY}, where it was used to establish a zero-measure spectrum  
(and therefore the Aubry-Andre conjecture) for the almost Mathieu
operator with Liouville frequencies $\omega$ at the critical 
coupling $\lb =1$. That argument was improved to the H\"older-1/2 continuity
(for arbitrary
$f \in C^1$) in
\ci{AMS1990} and subsequently used in \ci{L1993,L1994} to establish (\ref{aa})
 for the almost Mathieu operator 
for $\omega$ with unbounded continuous fraction expansion, therefore proving 
the Aubry-Andre conjecture for a.e. (but not all)
$\omega.$  The extension to all irrational $\omega$ is due to \cite{JK2001,AK2006}\footnote{ It should be noted that the argument of \cite{AK2006} that, in particular, completed the result for the critical value of $\lambda$, did not involve continuity in frequency}.

It was argued in \ci{AMS1990} that H\"older continuity of any order larger than $1/2$ would imply 
the desired continuity property of the measure of the
spectrum for all $\omega$ and $f \in C^1.$  It was first noted in \ci{JL1998} that in the regime of semi-uniform localization, the appropriate cut-offs of the exponentially localized eigenfunctions provide good enough approximate eigenvectors for a perturbed operator to establish almost Lipshitz continuity (thus establishing the Aubry-Andre conjecture in the localization regime available at that time). The idea of \ci{JK2001} was  that for Diophantine $\omega$ and analytic $f$  one can extract such eigenvectors (and thus establish almost Lipshitz continuity of $S$) by finding the cut-off places at distance $L$ from each other where the generalized eigenfunction is exponentially small in $L,$  simply as a corollary of positive Lyapunov exponents, without establishing localization. This led to  establishing that, in the regime of positive Lyapunov exponents, for any analytic $f,$  $|S(\frac{p_n}{q_n})| \to |S(\omega)|$ for every Diophantine $\omega$ and its approximants $\frac{p_n}{q_n}.$  Recently, it was shown in \ci{JM2012} that positivity of the Lyapunov exponent is not needed for this result, in particular, for analytic $f,$ and all irrational $\omega,$ $S(\frac{p_n}{q_n})\to S(\omega).$

Our goal is to show that under the condition of positivity of the
Lyapunov exponent, one can significantly relax the required regularity of $f.$

   For a given energy $E\in\real$, a formal solution $u$ of 
\be \label{schrodinger2} \ham u = Eu\ee
 with operator $\ham$ given by (\ref{schrodinger1}) 
  \comm{ \begin{equation}\label{schrodinger2}
      u(n+1) + u(n-1) + f(\theta + n\omega)u(n) = E u(n)
   \end{equation}}
  can be reconstructed from its values at two consecutive points with the transfer matrix
   \begin{equation}\label{evop}
     A^E(\theta) = 
     \left(\begin{array}{cc}E - f(\theta) & -1\\ 1&0 \end{array} \right); \hspace{.5in} A^E: \tor \to \slr
   \end{equation}
   via the equation
   \begin{equation}\label{dynsol}
      \left(\begin{array}{c}u(n+1)\\ u(n)\end{array}\right) =
     A^E(\theta+n\omega)      \left(\begin{array}{c}u(n)\\ u(n-1) \end{array} \right).
   \end{equation}
   Setting  $R:\tor\to\tor,\;Rx:=x+\omega,$ the pair $(\omega, A^E)$ viewed as a linear skew-product 
   $(x,v)\to(Rx,A(x)v),\;x\in\tor,\;v\in\bbr^2,$ is called the corresponding Schr\"{o}dinger cocycle.
   The  iterations of the cocycle $(\omega,A^E)$ for
   $k\geq 0$ are given by
   \begin{equation}\label{cocycle}
          A^E_{k}(\theta) = A^E(R^{(k-1)}\theta) \cdots A^E(R^{ 1}\theta) A^E(\theta),
      \hspace{.2in} A_0^E = I
   \end{equation} 
   and
      \begin{equation}\label{backcocycle}
      A^E_{k}(\theta) = \left(A^E_{-k}(R^{k+1}\theta)\right)^{-1};\hspace{.1in}k<0.
   \end{equation} 
   Therefore, it can be seen from (\ref{dynsol})
   that a solution to (\ref{schrodinger2}) for chosen initial conditions
   $(u(0),u(-1))$ for all $k\in \zee$ is given by,
   \begin{equation}
      \left(\begin{array}{c}u(k)\\ u(k-1)\end{array}\right) =  
     \ A_k^E(\theta)\left(\begin{array}{c}u(0)\\ u(-1)\end{array}\right).
   \end{equation}
   From general properties of subadditive ergodic cocycles, we can define the Lyapunov
   exponent
   \begin{equation}\label{avginf}
      \ly(E) = \lim_k\frac{1}{k}\int\ln \|A^E_k(\theta)\|\de\theta
      = \inf_k \frac{1}{k}\int\ln \|A_k^E(\theta)\|\de\theta,
   \end{equation}
   furthermore,
   $\ly(E) =\lim_k\frac{1}{k}\ln \|A^E_k(\theta)\|$
   for almost all $\theta\in\tor$.

   As mentioned above, $\ly$ may be discontinuous in the non-analytic category. 
   Set $L_+(\omega):=\{E: \ly(E)>0\}.$

   Our main result is

   \begin{theorem}\label{maint}
          For every irrational $\omega$, there exists a 
	  sequence of rationals $\frac{p_n}{q_n}\to \omega$ such that
          for any  $f\in\mathcal{C}^\gamma(\tor)$
	  with $\gamma>\frac{1}{2}$ 
	  \begin{equation}\label{irrmc}
	     S\left(\frac{p_n}{q_n}\right)\cap L_+(\omega)
              \to S\left(\omega\right) \cap L_+(\omega). 
	  \end{equation}
   \end{theorem}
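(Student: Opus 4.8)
The plan is to realize the convergence~(\ref{irrmc}), in the sense of~(\ref{eq_def_topology}), by combining an elementary one-sided inclusion with a quantitative H\"older-$\gamma$ continuity estimate for the spectrum on the positive-Lyapunov set; the latter is the heart of the matter. Throughout, $\frac{p_n}{q_n}$ is chosen among the continued fraction convergents of $\omega$ (and, at the very end, replaced by a subsequence of these), so that $|\omega-\frac{p_n}{q_n}|<q_n^{-2}$ and $q_n\to\infty$, and all the spectra in play lie in the fixed interval $[-2-\|f\|_\infty,\,2+\|f\|_\infty]$. \emph{Step 1 (the easy inclusion).} By the Hausdorff continuity of $\omega\mapsto S(\omega)$, which needs only continuity of $f$ (\ci{AS1983,El82}), $d_H(S(\frac{p_n}{q_n}),S(\omega))\to 0$; hence any $E\notin S(\omega)$ satisfies $E\notin S(\frac{p_n}{q_n})$ for all large $n$, so $\limsup_n S(\frac{p_n}{q_n})\subseteq S(\omega)$ and therefore $\limsup_n\chi_{S(p_n/q_n)\cap L_+(\omega)}\le\chi_{S(\omega)\cap L_+(\omega)}$ pointwise.

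\emph{Step 2 (H\"older-$\gamma$ continuity on $L_+(\omega)$ --- the main step).} For $m\in\bbn$ put $L_+^{(m)}:=\{E:\ly(E)\ge 1/m\}$, so $L_+(\omega)=\bigcup_m L_+^{(m)}$. The goal here is to produce constants $C_m<\infty$ with
\[
 \dist\!\left(E,\,S\!\left(\tfrac{p_n}{q_n}\right)\right)\ \le\ C_m\,(\log q_n)^{\gamma}\,q_n^{-2\gamma}
 \qquad\text{for all }E\in S(\omega)\cap L_+^{(m)}\text{ and all large }n,
\]
following the strategy of~\ci{JK2001} but in the $\mathcal{C}^\gamma$ category. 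Set $L:=\lceil 4\gamma\,\ly(E)^{-1}\log q_n\rceil$, so $L\le 5\gamma m\log q_n$ on $L_+^{(m)}$. Using that $E\in S(\omega)=\sigma(\ham_{\omega,\theta})$ for every phase, that $\ly(E)>0$, and the uniform upper bound $\|A_k^E(\theta)\|\le e^{(\ly(E)+o(1))k}$ valid for all $\theta$ (this is the uniformity of the upper Lyapunov exponent over the strictly ergodic base, established earlier in the paper), one builds a normalized $\psi$ supported on an interval $W$ of length at most $CL$ and a phase $\theta_\ast$ with $\|(\ham_{\omega,\theta_\ast}-E)\psi\|\le e^{-\ly(E)L/2}$; the essential point, and the gain over the general $\mathcal{C}^1$ situation, is that positivity of $\ly(E)$ lets $\psi$ be taken exponentially small near both endpoints of $W$, so that the boundary error is $e^{-cL}$ rather than $O(1/L)$. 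Translating $W$ to a neighbourhood of the origin and passing from $\omega$ to $\frac{p_n}{q_n}$ changes the potential on $W$ by at most $\|f\|_{\mathcal{C}^\gamma}(|W|\,|\omega-\frac{p_n}{q_n}|)^\gamma\le\|f\|_{\mathcal{C}^\gamma}(CLq_n^{-2})^\gamma$, hence
\[
 \dist\!\left(E,S\!\left(\tfrac{p_n}{q_n}\right)\right)\ \le\ \frac{\|(\ham_{p_n/q_n,\theta_\ast}-E)\psi\|}{\|\psi\|}\ \le\ e^{-\ly(E)L/2}+\|f\|_{\mathcal{C}^\gamma}(CLq_n^{-2})^\gamma,
\]
and the choice of $L$ makes the right-hand side at most $q_n^{-2\gamma}+C_m'(\log q_n)^\gamma q_n^{-2\gamma}$, which is the claim. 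The genuine difficulty of the whole theorem sits here: carrying out the construction of $\psi$ rigorously using only continuity of $f$ and the ergodic uniformity statement, without the analytic tools (large deviations, almost localization) that~\ci{JK2001} relied on.

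\emph{Step 3 (continuity $\Rightarrow$ measure estimate).} As the spectrum of a period-$q_n$ operator, $S(\frac{p_n}{q_n})$ is a union of at most $q_n$ bands, so its $\delta$-neighbourhood has Lebesgue measure exceeding that of $S(\frac{p_n}{q_n})$ by at most $2(q_n+1)\delta$. With $\delta_n^{(m)}:=C_m(\log q_n)^\gamma q_n^{-2\gamma}$, Step~2 gives $S(\omega)\cap L_+^{(m)}\subseteq\big(S(\frac{p_n}{q_n})\big)_{\delta_n^{(m)}}$, whence
\[
 \Big|\,S(\omega)\cap L_+^{(m)}\setminus S\!\left(\tfrac{p_n}{q_n}\right)\Big|\ \le\ 2(q_n+1)\,\delta_n^{(m)}\ \le\ 4C_m\,(\log q_n)^\gamma\,q_n^{\,1-2\gamma}\ \longrightarrow\ 0
\]
as $n\to\infty$, precisely because $\gamma>\tfrac12$. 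So $\chi_{S(p_n/q_n)}\to 1$ in Lebesgue measure on $S(\omega)\cap L_+^{(m)}$ for each fixed $m$.

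\emph{Step 4 (subsequence and conclusion).} Convergence in measure gives a.e.\ convergence along a subsequence; doing this for every $m$ and diagonalizing --- keeping $\frac{p_n}{q_n}$ among the convergents of $\omega$, so that $\frac{p_n}{q_n}\to\omega$ is preserved --- produces a sequence along which, for each $m$, Lebesgue-a.e.\ $E\in S(\omega)\cap L_+^{(m)}$ lies in $S(\frac{p_n}{q_n})$ for all large $n$. Since $L_+(\omega)=\bigcup_m L_+^{(m)}$, this yields $\liminf_n\chi_{S(p_n/q_n)\cap L_+(\omega)}\ge\chi_{S(\omega)\cap L_+(\omega)}$ a.e., and together with Step~1, $\chi_{S(p_n/q_n)\cap L_+(\omega)}\to\chi_{S(\omega)\cap L_+(\omega)}$ Lebesgue-a.e., i.e.\ (\ref{irrmc}) in the sense of~(\ref{eq_def_topology}). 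Steps~1, 3, 4 are soft; the whole weight of the argument is on the $\mathcal{C}^\gamma$ approximate-eigenvector construction of Step~2, which is where I expect the main work --- and the main obstacle --- to be.
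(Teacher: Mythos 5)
Your Steps 1, 3, and 4 match the soft parts of the paper's argument (the easy Hausdorff inclusion, the band-count $\delta$-neighbourhood estimate, and passing from convergence in measure on each $L_+^{(m)}$ to a.e.\ convergence along a subsequence). The theorem's entire weight indeed rests on Step 2, and there is a genuine gap precisely there.

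You want to construct a normalized $\psi$ supported on an interval $W$ of length $\sim CL\sim\log q_n$ with boundary error $e^{-\ly(E)L/2}$, which would give almost-Lipschitz continuity of the spectrum on $L_+^{(m)}$. The obstruction is: to cut $\psi$ off with exponentially small boundary error you must locate orbit points where the transfer matrix is large in the appropriate direction, and for a merely $\mathcal{C}^\gamma$ potential the set $V_k=\{\theta:\|A^E_k(\theta)\|>e^{ck\ly(E)}\}$ need not contain any interval of length comparable to $1/L$ --- it can be badly fragmented, so Lemma~\ref{JitoLemma} cannot place a cut-off within $O(L)$ steps of the maximum of $\psi$. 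The paper's proof of Theorem~\ref{drc} has to work much harder here: it replaces $f$ by its Fejer polynomial approximation $f_N$ of degree $N=\exp\{\chi k\tau/\gamma\}$, uses the inclusions~(\ref{incls}) (which rely crucially on the uniform cocycle estimate of Corollary~\ref{cocyest}) to transfer the ``large transfer matrix'' set to a set cut out by a trigonometric polynomial, deduces it contains an interval of length only $\exp\{-\chi k\varsigma/\gamma\}$, and then uses the Diophantine condition and Lemma~\ref{JitoLemma} to place cut-offs at distance $\sim q_n\sim\exp\{k\chi\varsigma(1+2\kappa)/\gamma\}$. Thus $|\Lambda|$ is polynomial in $q_n$ while the boundary decay is $e^{-ck}$ with $k\sim\log q_n$, so the generalized eigenfunction is only \emph{polynomially} small in $|\Lambda|$ at the cut-offs --- exactly what the introduction advertises --- yielding $\beta$-H\"older continuity for any $\beta<\gamma$, not the almost-Lipschitz bound you assert. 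Your claim that ``positivity of $\ly(E)$ lets $\psi$ be taken exponentially small near both endpoints of $W$'' within a window of length $\sim\log q_n$ is precisely the missing step, and I do not see how to get it without the analytic large-deviation machinery your proposal (correctly) says you do not have. Finally, the paper also treats non-Diophantine $\omega$ by a separate, softer route (Theorem~\ref{arc}, $\frac{\gamma}{1+\gamma}$-H\"older continuity with no Lyapunov-positivity hypothesis), a case your sketch does not address.
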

   \begin{remark}
    \begin{enumerate}
     \item The convergence holds in the strong sense of (\ref{eq_def_topology}).
      \item The sequence $\frac{p_n}{q_n}$ will be the full sequence 
            of continued fraction approximants of $\omega$ in the 
            Diophantine case, and an appropriate subsequence of it 
            otherwise. 
            For practical purposes of making conclusions about 
            $ S(\omega)$ based on the information on 
            $ S\left(\frac{p_n}{q_n}\right)$ it is sufficient 
            to have convergence along a subsequence.
      \item It is an interesting question whether $\gamma =1/2$ 
            represents a sharp regularity threshold for this result 
            for a.e. $\omega$.
      \item Lower regularity is sufficient for a measure zero set of 
            non-Diophantine $\omega$, see Theorem~\ref{main2}
      \item It is also interesting to find out what is the lowest 
            regularity requirement for the convergence of full union
            spectra, without condition of positive $\ly,$ and for the related 
            Last's intersection spectrum conjecture. Both are more 
            delicate and currently established only for analytic 
            $f$ (\ci{JM2012}, see also \cite{shamis}). 
            We expect that higher than $1/2$ regularity 
            should be required for those results, but that analytic 
            condition is improvable. 
      \item If we replace $f$ in (\ref{schrodinger1}) with $\lambda f$, 
            then $\ly$ is expected to be positive for most $f$ and large 
            $\lambda$ through most of the spectrum, creating a wide range 
            of applicability for Theorem~\ref{maint}. 
            For analytic $f$ this is known to hold uniformly in $(E,\omega)$ 
            for large $\lambda$. 
            For the rough case, the relevant results 
            are \cite{Spencer,bjerklov}(reviewer requests Sinai)
    \end{enumerate}
   \end{remark}
   Theorem~\ref{maint} certainly has an immediate corollary:
   \begin{corollary}\label{cor1}
      For every irrational $\omega$, there exists a 
      sequence of rationals $\frac{p_n}{q_n}\to \omega$ such that
      for any  $f\in\mathcal{C}^\gamma(\tor)$
      with $\gamma>\frac{1}{2}$ 
	  \begin{equation}\label{irrmc2}
	     |S\left(\frac{p_n}{q_n}\right)\cap L_+(\omega)|
                 \to |S\left(\omega\right) \cap L_+(\omega)|. 
	  \end{equation}
   \end{corollary}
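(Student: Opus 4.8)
The plan is to obtain Corollary~\ref{cor1} as an immediate consequence of Theorem~\ref{maint} together with the dominated convergence theorem, so that essentially all the work has already been done upstream. First I would record a uniform boundedness fact: since $f\in\mathcal{C}^\gamma(\tor)$ is in particular continuous on the compact torus, it is bounded, $M:=\|f\|_\infty<\infty$, and then for \emph{every} $\theta$ and every frequency (rational or irrational) the operator $\ham$ in (\ref{schrodinger1}) is self-adjoint with $\|\ham\|\le 2+M$ (the off-diagonal part being the free Laplacian, of norm $2$). Hence $S(p_n/q_n)\subset I$ and $S(\omega)\subset I$ for the fixed compact interval $I:=[-2-M,\,2+M]$, and the same inclusion survives intersection with $L_+(\omega)$.

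Second, I would dispatch the measurability bookkeeping so that the Lebesgue measures and indicator functions in (\ref{irrmc2}) make sense: each spectrum $S(p_n/q_n)$, $S(\omega)$ is closed, and $E\mapsto\ly(E)$ is an infimum over $k$ of the functions $E\mapsto\frac1k\int\ln\|A_k^E(\theta)\|\,\de\theta$, each of which is continuous in $E$ (from (\ref{evop})--(\ref{cocycle}) and compactness of $\tor$); thus $\ly$ is upper semicontinuous and $L_+(\omega)=\{E:\ly(E)>0\}=\bigcup_m\{E:\ly(E)\ge 1/m\}$ is an $F_\sigma$, in particular Borel. Consequently every set $S(p_n/q_n)\cap L_+(\omega)$ and $S(\omega)\cap L_+(\omega)$ is Borel and contained in $I$, so its indicator function lies in $L^1(\real)$ and is pointwise dominated by $\chi_I$.

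Finally, the argument itself: by Theorem~\ref{maint} and the definition (\ref{eq_def_topology}) of setwise convergence we have $\chi_{S(p_n/q_n)\cap L_+(\omega)}\to\chi_{S(\omega)\cap L_+(\omega)}$ pointwise Lebesgue-a.e.\ on $\real$. Since all these indicators are bounded above by the integrable function $\chi_I$, the dominated convergence theorem gives
\[
\Big|S\big(\tfrac{p_n}{q_n}\big)\cap L_+(\omega)\Big|
= \int_{\real}\chi_{S(p_n/q_n)\cap L_+(\omega)}\,\de E
\;\longrightarrow\;
\int_{\real}\chi_{S(\omega)\cap L_+(\omega)}\,\de E
= \big|S(\omega)\cap L_+(\omega)\big|,
\]
which is precisely (\ref{irrmc2}).

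There is no real obstacle at this stage — this is why the statement is an \emph{immediate} corollary: the whole difficulty is concentrated in Theorem~\ref{maint}, specifically in upgrading quantitative (Hausdorff-type) continuity of the spectrum to the much stronger a.e.\ setwise convergence in (\ref{eq_def_topology}), which is where the regularity threshold $\gamma>1/2$ and the positivity of the Lyapunov exponent genuinely enter, through the construction of good approximate eigenvectors from generalized eigenfunctions. Once that setwise statement is available, passing to Lebesgue measure is only the one-line application of dominated convergence above; the only mild care needed is the uniform a priori bound on the spectra and the Borel measurability of $L_+(\omega)$, both of which are routine.
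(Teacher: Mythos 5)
Correct, and essentially the paper's own (implicit) argument: the paper labels Corollary~\ref{cor1} an ``immediate'' consequence of Theorem~\ref{maint} and gives no separate proof, and your fleshing-out is exactly the intended one --- the a.e.\ convergence of indicators in the sense of (\ref{eq_def_topology}), together with the uniform containment of all the spectra in the compact interval $[-2-\|f\|_\infty,\,2+\|f\|_\infty]$, yields convergence of the Lebesgue measures by dominated convergence. The measurability remarks (closedness of the spectra, upper semicontinuity of $\ly$ making $L_+(\omega)$ Borel) are a sensible bit of bookkeeping, though in this context one could also note directly that a.e.\ convergence of indicators plus domination gives $|B_n\triangle B|\to 0$ and hence $|B_n|\to|B|$.
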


This corollary for analytic $f$ was the main result of \cite{JK2001} and our proof borrows some important ingredients from that work. The main idea of the current paper is to show that, for Diophantine frequencies,  $\gamma$-H\"older continuity of $f$ is sufficient to find  the cut-off places at distance $L$ from each other where the generalized eigenfunction is {\em polynomially} small in $L,$ thus establishing $\beta$-H\"older continuity of the spectrum with $\beta<\gamma.$ The requirement $\gamma>1/2$ comes from the application of the original argument in \ci{AMS1990}.  For non-Diophantine frequencies we obtain the statement by extending the H\"older continuity theorem of \ci{AMS1990} in the following way: for $\gamma$-H\"older functions $f$ the spectrum is $\frac{\gamma}{1+\gamma}$-H\"older continuous, which is sufficient, under an appropriate anti-Diophantine condition, even without positivity of the Lyapunov exponent, see Theorem~\ref{arc}.

The proof requires very tight control on the perturbations of
cocycles, in absence of continuity of the Lyapunov exponent. To this
end, we show that generally, for cocycles over uniquely ergodic
dynamics, upper bound is uniform in phases and neighborhoods
(Theorem~\ref{uniformsemicontinuity}).

    The main part of the proof of Theorem~\ref{maint} follows from 
    H\"{o}lder continuity properties of the spectrum in the Hausdorff metric 
    which are stated in section \ref{ear}. 
    The argument for the positive Lyapunov exponent regime uses tight bounds on  matrix cocycle approximation covered in
    Section \ref{rocmc} which in turn depend on a general result on uniform 
    upper-semicontinuity of Lyapunov exponents for cocycles over uniquely ergodic 
    dynamics,
    and is proven in Section \ref{usule}. 
    Section \ref{com} completes the proof of Theorem~\ref{maint}.

   \comm{   Let $f$ be a continuous real valued sampling function 
   defined on the torus $\tor = \real/\zee$,
   we get a family of Schr\"{o}dinger operators on $\ell^2(\zee)$ 
   \begin{equation}\label{schrodinger1}
      (\ham_{\omega,\theta} u)(n) = u(n+1) + u(n-1) + f(\theta + n\omega)u(n).
   \end{equation}
   parametrized by $\omega,\theta\in\tor$. 
   Let $S(\omega,\theta) = \sigma(\ham_{\omega,\theta})$.
   We will consider the upper spectral envelope in this paper,
   for any $\omega\in \tor$
   $S(\omega) = \cup_{\theta\in\tor} S(\omega,\theta)$; 
   for irrational $\omega$, for any $\theta\in\tor$,
   $S(\omega) = S(\omega,\theta)$.

     We will consider more general ergodic cocycles in Sections \ref{usule} and
   \ref{rocmc}.}

\section{Continued fraction approximants}
   For $\kappa \geq 0$, $\omega\in\real$ is said to be 
   $\kappa$-Diophantine if there exists some $C_\omega>0$ so that
   \begin{equation}\label{dioph}
      \|  n\omega \| > \frac{C_\omega}{n^{1+\kappa}} 
   \end{equation}
   for all $n\in\zee,$ where $\|\cdot\|$ denotes distance to the integers.
   For $\kappa >0$ a.e. $\omega\in\tor$ is $\kappa$-Diophantine.
   For $\kappa =0$ this condition is equivalent to having bounded type, 
   so a.e. $\omega\in\tor$ is not $0$-Diophantine.

   Writing $\omega$ in continued fraction expansion,
   \[ \omega = a_0 + \cfrac{1}{a_1+\cfrac{1}{a_2 +  \cfrac{1}{a_3+\ldots} } } =
        [a_0;a_1,a_2,\ldots],\]
   the truncations $p_n/q_n = [a_0;a_1,a_2,\ldots,a_n]$ 
   are known as the continued fraction approximants.  
   From the theory of continued fractions \cite{khinchin},
   for $\kappa$-Diophantine $\omega$ and $n > n_\omega$ we have 
   for some $C_\omega > 0$,
   \begin{equation}\label{dioph2}
      \frac{C_\omega}{q_n^{2+\kappa}} 
	   < \left|\omega - \frac{p_n}{q_n}\right|
	   \leq \frac{1}{q_nq_{n+1}} < \frac{1}{q_n^2}.
   \end{equation}
   
   We will also need the following fact:

   \begin{lemma}\label{JitoLemma}(e.g. \cite{JL00})
       For an interval $I\subset\tor$, if $n$ is such that 
       $|I| > \frac{1}{q_n}$ then for any $\theta\in\tor$ there is 
       $0\leq j \leq q_n + q_{n-1} - 1$ so that $\theta + j\omega \in I$.
   \end{lemma}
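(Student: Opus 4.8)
The statement concerns only the rotation $Rx=x+\omega$ on $\tor$, and since the conclusion is invariant under simultaneously translating $I$ and $\theta$, it is enough to treat $\theta=0$. So I would reformulate the claim as: the finite orbit segment $P_N:=\{\,j\omega \bmod 1:\ 0\le j\le N-1\,\}$ with $N=q_n+q_{n-1}$ meets every arc $J\subseteq\tor$ with $|J|>1/q_n$. Equivalently, it suffices to show that the largest gap between two cyclically consecutive points of $P_N$ has length at most $1/q_n$; then, applying this to $J=I-\theta$, we obtain some $0\le j\le q_n+q_{n-1}-1$ with $j\omega\in I-\theta$, i.e. $\theta+j\omega\in I$.

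The plan for the gap bound is to use the fine structure of $P_N$ coming from the continued fraction expansion. First I would record the elementary identity $q_n\|q_{n-1}\omega\|+q_{n-1}\|q_n\omega\|=1$, which follows from $p_nq_{n-1}-p_{n-1}q_n=\pm1$ together with the fact that $q_{n-1}\omega-p_{n-1}$ and $q_n\omega-p_n$ have opposite signs; in particular this gives $\|q_{n-1}\omega\|<1/q_n$ (this can also be read off from (\ref{dioph2})). The key combinatorial point is the three-distance (Steinhaus) description of $P_N$: for $N=q_n+q_{n-1}$ the points of $P_N$ cut $\tor$ into exactly $q_n$ arcs of length $\|q_{n-1}\omega\|$ and $q_{n-1}$ arcs of length $\|q_n\omega\|$ --- note that the total-length bookkeeping is precisely the identity above. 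I would either quote this directly from the three-gap theorem, or prove it by induction on the segment length from $q_n$ up to $q_n+q_{n-1}$: at length $q_n$ the gaps take only the two values $\|q_{n-1}\omega\|$ and $\|q_{n-1}\omega\|+\|q_n\omega\|$, and inserting the next point $j\omega$ (for $q_n\le j<q_n+q_{n-1}$), which equals $(j-q_n)\omega$ displaced by the signed amount $q_n\omega-p_n$ of modulus $\|q_n\omega\|$, splits one of the remaining long gaps into a piece of length $\|q_{n-1}\omega\|$ and a piece of length $\|q_n\omega\|$; after $q_{n-1}$ such insertions no long gap survives. Either way the maximal gap of $P_N$ equals $\|q_{n-1}\omega\|<1/q_n$, which is exactly what the reduction in the first paragraph needs.

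The only genuinely delicate ingredient is this three-gap description (equivalently, verifying that each newly inserted point lands inside a surviving long gap and subdivides it in the prescribed way); everything else is bookkeeping with the standard continued-fraction estimates already recorded above. Since the fact is entirely classical, in the write-up it can simply be cited as in the statement, with the translation reduction and the inequality $\|q_{n-1}\omega\|<1/q_n$ made explicit. If one instead wants a self-contained argument, I expect the induction to be the part requiring the most care, since one must track which endpoint of which long gap each inserted point is adjacent to.
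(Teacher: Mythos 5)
The paper does not actually prove this lemma; it is stated with a citation to \cite{JL00} and used as a black box. Your proposal supplies a correct self-contained proof, and it is the standard one. The translation reduction to $\theta=0$ is valid, the identity $q_n\|q_{n-1}\omega\|+q_{n-1}\|q_n\omega\|=1$ (from $p_nq_{n-1}-p_{n-1}q_n=\pm1$ and the alternating signs of $q_k\omega-p_k$) is correct and immediately yields $\|q_{n-1}\omega\|<1/q_n$, and the three-gap description of $\{j\omega : 0\le j\le q_n+q_{n-1}-1\}$ as consisting of exactly $q_n$ arcs of length $\|q_{n-1}\omega\|$ and $q_{n-1}$ arcs of length $\|q_n\omega\|$ is precisely the Steinhaus three-distance theorem evaluated at this orbit length, with the identity above serving as the total-length consistency check. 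Since $\|q_n\omega\|<\|q_{n-1}\omega\|$, the maximal gap is $\|q_{n-1}\omega\|<1/q_n$, so any arc of length exceeding $1/q_n$ meets the orbit segment, which is exactly the lemma. Your induction sketch for establishing the gap structure at length $q_n+q_{n-1}$ from the structure at length $q_n$ is also the standard route; the only care required, as you note, is verifying that each inserted point $j\omega$ with $q_n\le j<q_n+q_{n-1}$ lands inside a surviving long gap of length $\|q_{n-1}\omega\|+\|q_n\omega\|$ and subdivides it into the two prescribed lengths, which it does since it sits at signed displacement $q_n\omega-p_n$ from the already-present point $(j-q_n)\omega$. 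In short: the paper outsources this to a reference, and you have filled in what the reference would say; there is no conflict of approach.
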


   We are now ready to formulate a more detailed version of the main Theorem

   \begin{theorem}\label{main2} Assume  $f\in\mathcal{C}^\gamma(\tor)$
	  with  $1\geq \gamma >0.$ Then 
     \begin{enumerate}
      \item If $\omega$ is $\kappa$-Diophantine, $\kappa >0,$ and $\gamma>\frac{1}{2},$ then
        \begin{equation}
	     S\left(\frac{p_n}{q_n}\right)\cap L_+(\omega)
                 \to S\left(\omega\right) \cap L_+(\omega). 
        \end{equation}
        for $p_n/q_n$ the sequence of continued fraction approximants of $\omega.$
      \item If $\omega$ is not
        $\kappa$-Diophantine  with $\kappa = \gamma^{-1}-1, then $
	\begin{equation}\label{irrmc1}
          S\left(\frac{p_n}{q_n}\right)\to S\left(\omega\right) 
        \end{equation}
     \end{enumerate}
     for a subsequence of approximants
   \end{theorem}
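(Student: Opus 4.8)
The plan is to obtain Theorem~\ref{main2} by combining the H\"older continuity of the spectrum (Theorem~\ref{arc}), the cocycle-approximation estimates of Section~\ref{rocmc}, and an elementary band-counting plus Borel--Cantelli argument. Two preliminary remarks: all spectra in play lie in the fixed interval $J:=[-2-\|f\|_\infty,\,2+\|f\|_\infty]$, and $S(\omega)$ is closed. By the definition (\ref{eq_def_topology}) of setwise convergence, each conclusion splits into an \emph{upper} inclusion ($\limsup$ of the approximant spectra lies in the limit set) and a \emph{lower} inclusion (the reverse, with $\liminf$, modulo a Lebesgue-null set). The upper inclusion is immediate and needs no exceptional set: Theorem~\ref{arc} gives $d_H(S(p_n/q_n),S(\omega))\le C_0|\omega-p_n/q_n|^{\gamma/(1+\gamma)}\to0$, so any $E$ in infinitely many $S(p_n/q_n)$ satisfies $\dist(E,S(\omega))=0$, i.e.\ $E\in S(\omega)$, and intersecting with the $n$-independent set $L_+(\omega)$ is harmless. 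So the whole task is the lower inclusion, and the mechanism is this: $S(p/q)$ is a union of at most $q$ intervals, hence has at most $q+1$ gaps in $J$; if $\Sigma\subseteq S(\omega)$ is a fixed set and $\dist(E,S(p_n/q_n))\le\eta_n$ for all $E\in\Sigma$ and all large $n$, then in each gap of $S(p_n/q_n)$ the part of $\Sigma$ is confined to two $\eta_n$-collars, so $|\Sigma\setminus S(p_n/q_n)|\le 2(q_n+1)\eta_n\le 4q_n\eta_n$. When $\sum_n q_n\eta_n<\infty$ along the sequence in play, Borel--Cantelli yields $\bigl|\limsup_n(\Sigma\setminus S(p_n/q_n))\bigr|=0$, i.e.\ a.e.\ point of $\Sigma$ lies in $S(p_n/q_n)$ for all large $n$; choosing $\Sigma$ to contain the limit set --- possibly after writing the limit set as a countable union of such $\Sigma$ --- then gives the lower inclusion. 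Everything thus reduces to exhibiting, in each case, a distance bound with summable $q_n\eta_n$.

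For part~1 (Diophantine $\omega$, $\gamma>\tfrac12$) I would fix $\beta\in(\tfrac12,\gamma)$ and $\delta>0$ and take $\Sigma=\Sigma_\delta:=\{E\in S(\omega):\ly(E)\ge\delta\}$. The input, to be proved in Section~\ref{rocmc}, is the near-$\gamma$-H\"older bound $\dist(E,S(p_n/q_n))\le C(\delta,\omega,\beta)|\omega-p_n/q_n|^\beta$ for $E\in\Sigma_\delta$ and $n$ large. The idea is that of \ci{JK2001}, run in the rough category: a polynomially bounded generalized eigenfunction $u$ at energy $E$ has, by positivity of $\ly(E)$ together with the uniform transfer-matrix control of Theorem~\ref{uniformsemicontinuity}, cut places $m<m'$ at mutual distance $\asymp\log q_n$ where $u$ and its neighbours are at most $|\omega-p_n/q_n|^\beta\max_{[m,m']}|u|$; truncating $u$ between them and using $\|f\|_{\mathcal C^\gamma}$ to bound $|f(\omega j+\theta)-f(\tfrac{p_n}{q_n}j+\theta)|\le\|f\|_{\mathcal C^\gamma}(|m'-m||\omega-p_n/q_n|)^\gamma$, which for a window of size $\asymp\log q_n$ is $\le|\omega-p_n/q_n|^\beta$, produces an approximate eigenvector of $H_{p_n/q_n,\theta}$ with residual $\le|\omega-p_n/q_n|^\beta$; the spectral theorem then gives the distance bound, and the requirement $\beta>\tfrac12$ (equivalently $\gamma>\tfrac12$) is precisely what the concluding step of \ci{AMS1990} uses. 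By (\ref{dioph2}), $\eta_n\le Cq_n^{-2\beta}$, so $q_n\eta_n\le Cq_n^{1-2\beta}$, which is summable along the \emph{full} continued fraction sequence since $q_n$ grows at least geometrically and $1-2\beta<0$. Finally $L_+(\omega)=\bigcup_{j\ge1}\{\ly\ge1/j\}$ is an $F_\sigma$ --- each $\{\ly\ge1/j\}$ is closed because $\ly=\inf_k\tfrac1k\int\ln\|A^E_k\|$ is upper semicontinuous in $E$ --- so running the argument for $\delta=1/j$ and discarding the countable union of the resulting null sets completes part~1.

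For part~2 ($\omega$ not $(\gamma^{-1}-1)$-Diophantine) no Lyapunov input is used: take $\Sigma=S(\omega)$ and $\eta_n=d_H(S(p_n/q_n),S(\omega))\le C_0|\omega-p_n/q_n|^{\gamma/(1+\gamma)}$ from Theorem~\ref{arc}. With $\kappa:=\gamma^{-1}-1$ one checks $(2+\kappa)\cdot\tfrac{\gamma}{1+\gamma}=1$, so along the full sequence $q_n\eta_n$ is only bounded --- hence the passage to a subsequence. Failure of the $\kappa$-Diophantine condition means that for every $c>0$ there is $n$ with $\|n\omega\|\le c\,n^{-(1+\kappa)}$; transferring to the best-approximation denominator $q_m$ with $q_m\le n<q_{m+1}$ gives, up to a constant, $q_{m+1}\ge c^{-1}q_m^{1+\kappa}$, and taking $c=c_k\downarrow0$ fast while thinning out produces indices $m_k\uparrow\infty$ with $q_{m_k+1}\ge 4^k q_{m_k}^{1+\kappa}$. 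Then (\ref{dioph2}) gives $|\omega-p_{m_k}/q_{m_k}|\le 4^{-k}q_{m_k}^{-(2+\kappa)}$, hence $q_{m_k}\eta_{m_k}\le C_0\,4^{-k\gamma/(1+\gamma)}$, which is summable; Borel--Cantelli along $(m_k)$ gives the lower inclusion, and with the upper inclusion this yields $\lim_k S(p_{m_k}/q_{m_k})=S(\omega)$ in the sense of (\ref{eq_def_topology}).

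The main obstacle is the distance bound of part~1: obtaining $\beta$-H\"older control with $\beta>\tfrac12$ of $\dist(E,S(p_n/q_n))$ on the positive-Lyapunov part of the spectrum for potentials that are only $\mathcal C^\gamma$, a regime where the Lyapunov exponent may be discontinuous and the large-deviation machinery of the analytic theory is unavailable. The key replacement is Theorem~\ref{uniformsemicontinuity} --- uniform upper semicontinuity of the upper Lyapunov exponent over uniquely ergodic base dynamics --- which is used both to bound the transfer matrices of the approximant cocycles uniformly in the phase and in a neighbourhood of $\omega$, and to place the cut points at the right (logarithmic) scale; this is carried out in Section~\ref{rocmc}. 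Everything else --- the upper inclusion, the band counting, the continued-fraction bookkeeping, and Borel--Cantelli --- is routine once Theorems~\ref{arc} and \ref{uniformsemicontinuity} are available.
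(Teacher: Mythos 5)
Your reduction matches the paper's Section~\ref{com} step for step: upper inclusion from Hausdorff continuity (Theorem~\ref{arc}), lower inclusion from band counting ($\le q_n$ intervals), summability of $q_n\eta_n$, and Borel--Cantelli; and the two cases (Diophantine via Theorem~\ref{drc} with exceptional sets of arbitrarily small measure, anti-Diophantine via Theorem~\ref{arc} alone with $\kappa=\gamma^{-1}-1$ giving exactly exponent $1$ and forcing a subsequence) are handled exactly as the paper handles them. You are in fact more explicit than the paper at one point: the paper writes that $|A_n\setminus S(p_n/q_n)|\to0$ ``therefore'' $|A\setminus\liminf S(p_n/q_n)|=0$, but that implication is false for general sequences of sets --- it needs summability and the first Borel--Cantelli lemma, which you correctly supply (via geometric growth of $q_n$ in part~1, and via thinning to get $q_{m_k+1}\ge 4^k q_{m_k}^{1+\kappa}$ in part~2).

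One factual slip in your side sketch of the distance bound, which would mislead you if you tried to fill in Theorem~\ref{drc} yourself: the truncation window $\Lambda$ in the paper's argument has length $\asymp q_n$, not $\asymp\log q_n$. The quantity $k\asymp\log q_n$ is the length of the good transfer-matrix blocks used to produce Green's function decay near the two endpoints, but those endpoints are located by searching the orbit via Lemma~\ref{JitoLemma}, which requires $O(q_n)$ steps to hit an interval of length $\asymp q_n^{-1}$; you cannot find cut places at mutual distance $\asymp\log q_n$. The balance that actually makes the estimate work is between the exponential (in $k$) Green's function decay $e^{-cdk\ly(E)}$ and the $\gamma$-H\"older potential error $C_f(|\Lambda|\cdot|\omega-\omega'|)^\gamma$ with $|\Lambda|\asymp e^{ck}$, and the freedom to choose $k$ in a whole range (\ref{diophqn}) as a function of $|\omega-\omega'|$ is what yields any $\beta<\gamma$, not a logarithmic window.
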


   \begin{remark}
     \begin{enumerate}
      \item  Thus, for Lipshitz $f$ (\ref{irrmc1}) holds for a.e. $\omega$ 
             (all except possibly for the bounded type). 
             This is already implicit in \cite{L1993}.
      \item Theorem~\ref{main2} certainly implies Theorem~\ref{maint}
     \end{enumerate}
         
   \end{remark}

\section{Uniform upper semicontinuity of the upper Lyapunov exponent}
\label{usule}
   This section is devoted to some fundamental properties of the Lyapunov
   exponent in the general setting. 
   It is well known that the Lyapunov exponent of
   ergodic cocycles is upper semicontinuous.
   For a uniquely ergodic underlying dynamics,
   Furman \cite{furman97} has shown, by a subadditivity
   argument originally used by
   Katznelson and Weiss \cite{KW} to prove
   Kingman's ergodic theorem, that rate of convergence of a
   cocycle from above can be bounded uniformly in the phase.
   Now we investigate the coincidence of these properties.

   Assume $(X,T,\mu)$ is an ergodic Borel probability space.
   We use the notation $\{f\}$
   for a sequence $\left(f_n\right)\in \mathcal{C}(X,\real)\cap L^1(X,\mu)$
   which is a continuous subadditive
   cocycle with respect to $T,$ that is
   \[ f_{n+m}(x) \leq f_n(x) + f_m(T^nx). \]
   The category of continuous subadditive cocycles
   will be denoted $\Gamma(X)$.
   We define the Lyapunov exponent as,
   \[ \Lambda(f) = \lim_n \frac{1}{n}\int_X f_n(x) \mu(\de x). \]
   By Kingman's subadditive ergodic theorem we have,
   for almost all $x\in X$,
   \begin{equation}\label{lyapunov2}
 \lim_n \frac{1}{n} f_n(x) = 
      \inf_n \frac{1}{n}\int_X f_n(x) \mu(\de x) = \Lambda(f).
   \end{equation}
   To proceed it will be useful to introduce a metric on $\Gamma(X)$.
   For two continuous cocycles $\{g\},\{f\}\in \Gamma(X)$ define
   \be \label{cocydist}
           \de\left(\{g\},\{f\}\right) = \sum_{n\geq 1}
           \frac{1}{2^n}\frac{\|g_n - f_n\|_0}{1+ \|g_n-f_n\|_0}, \ee
where the norm $\|f\|_0 = \max_\theta |f(\theta)|.$ 
   Then $\left(\Gamma(X),\de\right)$ is a metric space.
   Since for any $n$ the map $\{f\}\to \frac{1}{n}\int_X f_n$
   is continuous in $\left(\Gamma(X),d\right)$,
   it follows that the infimum 
   \[\{f\}\to\inf_n \frac{1}{n}\int_X f_n \mu(\de x) = \Lambda(f)\]
   is upper semicontinuous in
   $\left(\Gamma(X),\de\right)$.\footnote{This is true for general
   $L^1$ cocycles, with no continuity required.}
   On the other hand, for a fixed cocycle over uniquely ergodic
   dynamics the convergence
   is uniform in the phase.
   \begin{theorem}{\bf (Furman \cite{furman97})} \label{furmanT97}
      Let $\{f\}$ be a continuous subadditive cocycle
      on a compact uniquely ergodic space.
      Given $\epsilon>0$, there exists $n_\epsilon$
      so that for $n>n_\epsilon$ for any $x\in X$ we have
      \[\frac{1}{n}f_n(x) < \Lambda(f) + \epsilon .\]
   \end{theorem}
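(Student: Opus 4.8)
The plan is to adapt the Katznelson--Weiss subadditivity argument, exactly as indicated in the paragraph preceding the statement. Fix $\epsilon>0$. The first step is to choose an integer $m$ large enough that $\frac{1}{m}\int_X f_m\,d\mu < \Lambda(f) + \epsilon/2$; this is possible because $\Lambda(f)=\inf_n\frac1n\int f_n$. Set $g := f_m \in \mathcal{C}(X,\real)$, a genuine continuous function. Since $g$ is continuous on the compact space $X$ and unique ergodicity forces the Birkhoff averages $\frac1N\sum_{k=0}^{N-1} g(T^{km}x)$ (equivalently, along the $T$-orbit with appropriate bookkeeping) to converge \emph{uniformly} in $x$ to $\int_X g\,d\mu$, there exists $N_0$ such that for all $N\geq N_0$ and all $x\in X$,
\[
\frac{1}{N}\sum_{k=0}^{N-1} g\bigl(T^{km}x\bigr) < \int_X g\,d\mu + \frac{m\epsilon}{2}.
\]
(One must be slightly careful: unique ergodicity of $(X,T,\mu)$ gives uniform convergence of Birkhoff sums of any continuous function over $T$; the iterates $T^{m}$ need not be uniquely ergodic, but one can instead average $g\circ T^j$ over all $j$ in a long block and then sum $f_m$ along arithmetic progressions, or simply note that $\frac1M\sum_{j=0}^{M-1} g(T^j x)\to\int g\,d\mu$ uniformly and organize the telescoping accordingly. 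This bookkeeping is the one genuinely fiddly point.)

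The second step is the subadditive telescoping. For arbitrary large $n$, write $n = Nm + r$ with $0\le r < m$. Iterating the subadditivity inequality $f_{a+b}(x)\le f_a(x)+f_b(T^a x)$ gives
\[
f_n(x) \;\le\; \sum_{k=0}^{N-1} f_m\bigl(T^{km}x\bigr) \;+\; f_r\bigl(T^{Nm}x\bigr)
\;=\; \sum_{k=0}^{N-1} g\bigl(T^{km}x\bigr) \;+\; f_r\bigl(T^{Nm}x\bigr).
\]
Since $r$ ranges over the finite set $\{0,\dots,m-1\}$ and each $f_r$ is continuous on compact $X$, the term $f_r(T^{Nm}x)$ is bounded in absolute value by a constant $C_m$ independent of $n$ and $x$. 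Dividing by $n$ and using the uniform Birkhoff bound from Step~1, together with $N/n \le 1/m$, yields
\[
\frac{1}{n} f_n(x) \;\le\; \frac{1}{m}\Bigl(\int_X g\,d\mu + \frac{m\epsilon}{2}\Bigr) + \frac{C_m}{n}
\;<\; \Lambda(f) + \frac{\epsilon}{2} + \frac{\epsilon}{2} + \frac{C_m}{n},
\]
uniformly in $x$. Choosing $n_\epsilon$ large enough that $C_m/n_\epsilon < $ (a further $\epsilon$-slack, after starting from $\epsilon/3$'s rather than $\epsilon/2$'s) gives $\frac1n f_n(x) < \Lambda(f)+\epsilon$ for all $n>n_\epsilon$ and all $x\in X$, which is the claim.

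The main obstacle is purely the organizational one flagged above: unique ergodicity delivers uniform convergence of Birkhoff averages for the transformation $T$ acting on continuous functions, whereas the natural telescoping of the subadditive cocycle produces sums of $f_m$ along the sub-orbit $\{T^{km}x\}_k$. Reconciling these requires either invoking that $\frac1M\sum_{j=0}^{M-1}\varphi(T^j x)\to\int\varphi\,d\mu$ uniformly for the continuous function $\varphi=f_m$ and then breaking an interval of length $n$ into overlapping blocks (so that a uniform bound on $\frac1n\sum_{j<n} f_m(T^j x)$ is obtained, from which the subadditive bound $f_n(x)\le \frac1m\sum_{j=0}^{n-1} f_m(T^j x) + \text{(boundary terms)}$ finishes the job), or averaging the shift in Step~1 over a residue before summing. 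Once that combinatorial arrangement is pinned down, the rest is the standard three-$\epsilon$ estimate and requires no further ideas.
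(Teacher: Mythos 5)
Your argument is correct, and it is the classical ``telescope then average over residue classes'' proof rather than the Katznelson--Weiss stopping-time argument that the paper attributes to Furman and itself deploys for the stronger Theorem~\ref{uniformsemicontinuity}. The one nontrivial point you flag --- that $T^m$ need not be uniquely ergodic --- is resolved exactly as you sketch: for each $s \in \{0,\dots,m-1\}$ iterate subadditivity along the arithmetic progression $s, s+m, s+2m,\dots$, then sum over $s$ and divide by $m$ to obtain
\[ f_n(x) \;\le\; \frac{1}{m}\sum_{j=0}^{n-m} f_m(T^j x) \;+\; 2\max_{0\le r< m}\|f_r\|_\infty , \]
whose middle term is now an honest Birkhoff sum for $T$ itself, so unique ergodicity applies directly and uniformly; you should carry this averaging out explicitly rather than leave it parenthetical, since it is the only place the hypotheses do real work. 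By contrast, the paper's proof of the strengthened Theorem~\ref{uniformsemicontinuity} (uniform in cocycle as well as phase) defines a stopping time $n(x)$ as the first $n$ with $\frac1n f_n(x) < \Lambda(f)+\epsilon$, shows the bad set $\{n(x) > N\}$ has small $\mu$-measure, chops the orbit of each $x$ into variable-length blocks governed by these stopping times, and controls the contribution of the bad set via Urysohn's lemma combined with unique ergodicity. Your telescoping is shorter and more transparent for the phase-only uniformity stated in Theorem~\ref{furmanT97}; the stopping-time machinery is what lets the paper simultaneously perturb the cocycle, which is the refinement it actually needs downstream. Both are valid proofs of the statement as given.
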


   In the following theorem we combine these properties to obtain uniform
   uppersemicontinuity in both the cocycle and phase. Note that our
   simple proof is  self-contained, and except for a basic result in
   \cite{CS} (requiring unique ergodicity) it only uses
   compactness,continuity and subadditivity. In particular, it gives a
   significantly streamlined proof of Theorem \ref{furmanT97}.\footnote{a similar
     idea has been used in more specialized settings in \cite{aj,ajs}}
   \begin{theorem}\label{uniformsemicontinuity}
      Let $(X,T,\mu)$ be  a compact uniquely ergodic
      dynamical system. 
      Then $\Lambda:\Gamma(X)\to\bbr$ is uniformly  uppersemicontinuous with
      respect to $\de,$ meaning that given $\epsilon > 0$ there exist
     $\delta_\epsilon,n_\epsilon$ such that for $g$ with
     $\de(f,g)<\delta_\epsilon$ and $n>n_\epsilon,$ for all $x\in X$,
     \[ \frac{1}{n}g_n(x) \leq \Lambda(f) +\epsilon.\]
   \end{theorem}
   \begin{proof} 

By \cite{CS} for any $x$ and $\epsilon>0$ one finds
$m(x) >0$ such that $\frac 1{m(x)}
f_{m(x)}(x)<\Lambda(f)+\epsilon$. By continuity and compactness, we
find $M<\infty$ so that for all $x$, $m(x)\leq M$ and 
$\delta_\epsilon>0$ such that for $\de(f,g)<\delta_\epsilon$,
$\frac{1}M g_{m(x)}(x)<\Lambda(f)+2\epsilon$. By subadditivity,
for $k$ large enough and any $r=0,\ldots,M$ and all $x$ one has
$$
\frac{1}{km(x)+r}g_{km(x)+r}(x)\leq \frac{1}{km(x)+r}(km(x)(\Lambda(f)+2\epsilon)+Cr)
< \Lambda(f)+3\epsilon\;.\qed
$$

\end{proof}

  \comm{  
Given small $\epsilon > 0$, for $x\in X$ let
     \[ n(x) = \inf\{ n\in\nn| f_n(x) < n\left(\Lambda(f) + \epsilon\right) \} \]
     be the minimum time the cocycle $\{f\}$  arrives near the Lyapunov exponent.
     Furthermore, define the open sets
     \[ A_N := \{x\in X| n(x) \leq N\} 
        = \bigcup_{n=1}^N 
          \{ x\in X | f_n(x) < n\left(\Lambda(f) + \epsilon\right)  \}. \]

     Due to Kingman's theorem, almost every $x$ is in $A_N$ for some large enough $N$,
     that is $\mu(A_N) \to 1$.
     Let $N_\epsilon$ be large enough that, $N\geq N_\epsilon$ 
     implies $\mu\left(A_N\right) > 1- \epsilon$.
     Fix some $N > N_\epsilon$ for the rest of the proof and let
     \[\mathcal{N}_\epsilon = 
        \left\{\{g\}: \de(\{f\},\{g\})<\frac{\epsilon}{2^{N+1}}\right\}.\]
     For $g\in \mathcal{N}_\epsilon$ we have from the definition (\ref{cocydist}), for all $n\geq 1$
     \[ \frac{\| g_n - f_n \|_0}{1 + \|g_n - f_n\|_0}  < \frac{\epsilon 2^n}{2^{N+1} }\]
     which easily obtains for $N \geq n \geq 1$
     \[  \|f_n - g_n  \|_0  <\frac{\epsilon 2^{n-N-1}}{1 - \epsilon 2^{n-N-1}} < \epsilon 2^{n-N} . \]
     This implies, for  $g \in\mathcal{N}_\epsilon $ and $x$ so that $n(x)\leq N$ (equivalently, $x\in A_N$) 
     \begin{equation}\label{bigell}
         g_{n(x)}(x) < f_{n(x)}(x) + \epsilon
	    < n(x)\left(\Lambda(f)+2\epsilon\right).
     \end{equation}

      For any $x\in X $ construct a sequence $\left(x_i\right)$
      in $X$ in the following way.
      For $i = 1$ let $x_1 = x$ 
      and for subsequent terms let $x_{i+1} = T^{n_i}x_i$;
      where $n_i$ is defined as
      \[ n_i = n_i(x) =\left\{
         \begin{array}{cl}n(x_i), &\textrm{ if } x_i\in A_N \\
             1, & \textrm{ otherwise}  \end{array}\right.  .\]
      Notice that $1\leq n_j \leq N $.

      We now consider the cocycles for a sufficiently large index.
      Let $Q = \sup \{\|g_1\|_0 : g\in\mathcal{N}_\epsilon  \}$,
	  note if $\epsilon$ is small then
	  $Q < \|f_1\|_0 + 4\epsilon$.
      Let $M  > \frac{N}{\epsilon}$, and choose $p$ so that
      \[n_1 +\cdots +n_{p-1} \leq M < n_1+\cdots+n_p. \]
      Let $K = M - \left(n_1 +\cdots+n_{p-1}\right)\leq N$.
      For $g\in \mathcal{N}_\epsilon$, by subadditivity,
      \[ g_M(x) \leq 
        \sum^{p-1}_{i=1} g_{n_i}(x_i) + g_K(x_p)
          \leq \sum^{p-1}_{i=1} g_{n_i}(x_i) + NQ. \]
      We use the definition of $n_i$ and inequality (\ref{bigell}) for $x_i\in A_N$ and $g_1(x_i) \leq Q$ for $x_i\notin A_N$,
      to obtain
      \be\label{gfunct}
         g_M(x) \leq  
         \sum^{p-1}_{i=1}
         \left[ n_i\left(\Lambda(f) + 2\epsilon\right){\bf 1}_{A_N}(x_i)
                     + Q\cdot{\bf 1}_{X\backslash A_N}(x_i)\right] + NQ.\ee
      Note, $X\backslash A_N$ is a closed set of $\mu$ 
      measure less than $\epsilon$. 
      By regularity of the Borel measure, there is an open set
      $D$ containing $X\backslash A_N$ of measure less than $2\epsilon$,
      and by Urysohn's lemma there is a continuous function $h$
      so that $h|_{X\backslash A_N} = 1$ and $h|_{D^c} = 0$ and $0 \leq h \leq 1$.
      Therefore, for large $M$ we have uniformly in $x$,
      \[ Q\sum^{p-1}_{i=1}
           {\bf 1}_{X\backslash A_N}(x_i) \leq
         Q\sum_{i=1}^M {\bf 1}_{X\backslash A_N}(T^ix) 
              \leq Q\sum_{i=1}^M  h(T^ix) < 3QM\epsilon, \]
      where the last inequality holds due to the assumption that $(X,T,\mu)$ is compact uniquely ergodic so that for any continuous 
     function $h$ and $\epsilon > 0$ there is some $M < \infty$  so that for all $x\in X$, $|\frac{1}{M}\sum_{i=1}^M h(T^ix) - \mu(h)| < \epsilon$.
      Substituting this into the sum on the right of (\ref{gfunct}) and dividing by $M$, we find
      \beq
         \frac{1}{M}g_M(x) &\leq&  
         \frac{1}{M}\sum^{p-1}_{i=1}  
              n_i\left(\Lambda(f) + 2\epsilon\right){\bf 1}_{A_N}(x_i)
              + 3Q\epsilon
              + \frac{NQ}{M} \\
              & \leq& \label{nseq}
         \left(\Lambda(f) + 2\epsilon\right) + 3Q\epsilon + \frac{NQ}{M}\\
         & \leq& \Lambda(f) + 6Q\epsilon.
      \eeq
       We use the fact that $\sum_{i=1}^{p-1} n_i \leq M $ to reach (\ref{nseq}), the proof is concluded by noting that $Q$ depends only on $\{f\}$ so
       one only needs to make $\epsilon$ sufficiently small to obtain the result.\qed
  \end{proof}}

\section{Rate of convergence for matrix cocycles}
   \label{rocmc}
   The first application of Theorem \ref{uniformsemicontinuity}
   is to approximations of matrix cocycles.
   Consider a continuous matrix
   $A\in\mathcal{C}\left( X, {\bf GL}_{n}(\cx) \right)$
   defined on a compact uniquely ergodic space $\left(X,T,\mu\right)$.
   Let the metric on
   $\mathcal{C}\left( X, {\bf GL}_{n}(\cx) \right) $
   be defined by the norm $\|A\|_0 = \max_\theta \|A(\theta)\|$.
   Then 
   \[ \ln\|A_n(\theta)\| := \ln\| A(T^{n-1} x)\cdots A(x) \|,\; A_0=I, \]
   is a subadditive cocycle
   and its Lyapunov exponent is defined by
   \[\ly(A) = \inf_n \frac{1}{n} \int_X\ln\| A(T^{n-1} x)\cdots A(x) \|\mu(\de x).\]

   Immediately, an application of Theorem \ref{uniformsemicontinuity}
   results in uniform uppersemicontinuity of the Lyapunov exponent:
   given $\epsilon,$ for $D$ near $A$ and large $k$, we have
   \begin{equation}\label{fcocy}
     \|D_k(x)\| \leq \exp\{k(\ly(A)+\epsilon)\}
   \end{equation}
   uniformly in $x$, since $D$ in a small $C_0$ neighborhood of $A$ implies
   $ \{\ln\| D_n \|\}$ in a small $\de$ neighborhood of $ \{\ln\| A_n \|\}.$
   This observation leads to the following

   \begin{corollary}
   \label{cocyest}
      Let $\epsilon >0,$ and
      $A\in\mathcal{C}\left( X, {\bf GL}_{n}(\cx) \right).$ 
      For small enough $\delta$ and large $k_\epsilon$, if
	  \[ \|D - A\|_0  < \delta \] 
      and $k\geq k_\epsilon$, then
      \begin{equation}\label{cocyestimate}
         \| A_k - D_k \|_0 \leq 
           \delta e^{\{k(\ly(A)+\epsilon)\}}.
      \end{equation}
   \end{corollary}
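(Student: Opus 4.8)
The plan is to prove the bound on $\|A_k - D_k\|_0$ by a telescoping (hybrid) expansion of the product $D_k$ in terms of $A$, controlling each intermediate term with the uniform upper bound (\ref{fcocy}) that follows from Theorem \ref{uniformsemicontinuity}. First I would fix $\epsilon > 0$ and use (\ref{fcocy}): choose $\delta$ small and $k_\epsilon$ large so that $\|D - A\|_0 < \delta$ forces $\{\ln\|D_n\|\}$ into a $\de$-neighborhood of $\{\ln\|A_n\|\}$ small enough that, for all $k \geq k_\epsilon$ and all $x$, both $\|A_k(x)\| \leq e^{k(\ly(A)+\epsilon/2)}$ and $\|D_k(x)\| \leq e^{k(\ly(A)+\epsilon/2)}$. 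Here I would also want the analogous bound for the ``tail'' subproducts $A(T^{k-1}x)\cdots A(T^j x)$ and their $D$-counterparts; since these are themselves cocycle iterates over the shifted base point $T^j x$, the same uniform estimate applies to all of them simultaneously, at the cost of a fixed multiplicative constant for the finitely many short products of length $\leq k_\epsilon$ (whose norms are bounded by $(\|A\|_0 + \delta)^{k_\epsilon}$, absorbable into the constant or a slight increase of $\epsilon$).

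The core computation is the standard identity
\[
A_k - D_k = \sum_{j=0}^{k-1} \big(A(T^{k-1}x)\cdots A(T^{j+1}x)\big)\,\big(A(T^j x) - D(T^j x)\big)\,\big(D(T^{j-1}x)\cdots D(x)\big),
\]
with the usual convention that empty products equal $I$. Each middle factor has norm $< \delta$. Each left factor is a product of $k-1-j$ copies of $A$ evaluated along an orbit, so its norm is at most $e^{(k-1-j)(\ly(A)+\epsilon/2)}$ when $k-1-j \geq k_\epsilon$, and at most a fixed constant otherwise; symmetrically for the right factor, which is a $j$-fold product of $D$. Multiplying and summing the geometric-type series over $j$ yields $\|A_k - D_k\|_0 \leq C \delta \sum_{j} e^{(k-1-j)(\ly(A)+\epsilon/2)} e^{j(\ly(A)+\epsilon/2)}$, which is $C \delta \cdot k \cdot e^{k(\ly(A)+\epsilon/2)}$ up to the boundary corrections; absorbing the polynomial factor $k$ into the exponential by replacing $\epsilon/2$ with $\epsilon$ (valid for $k \geq k_\epsilon$ after possibly enlarging $k_\epsilon$) gives exactly (\ref{cocyestimate}).

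The one genuine subtlety — and the step I expect to be the main obstacle — is ensuring the uniform bound (\ref{fcocy}) is available \emph{simultaneously} for all the shifted subproducts appearing in the telescoping sum, not merely for $A_k(x)$ at the original phase $x$. This is precisely where uniform-in-$x$ upper semicontinuity (Theorem \ref{uniformsemicontinuity}) rather than mere pointwise or $L^1$ convergence is essential: the shifted products $A(T^{k-1}x)\cdots A(T^jx) = A_{k-j}(T^j x)$ are cocycle iterates at the new base point $T^j x$, so a bound uniform over all phases covers them at once. Everything else is bookkeeping: handling empty products, the finitely many short factors of length below $k_\epsilon$, and trading the factor of $k$ for a negligible increase in the exponential rate.
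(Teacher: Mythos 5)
Your proposal is correct and takes essentially the same approach as the paper. The paper uses the mirror-image telescoping identity with $D$-products on the left and $A$-products on the right, $A_k - D_k = \sum_{\ell=0}^{k-1} D_\ell(T^{k-\ell}\theta)\,(A-D)(T^{k-1-\ell}\theta)\,A_{k-1-\ell}(\theta)$, splits the sum into the two short-tail ranges plus the middle range, bounds the long factors by $e^{(\cdot)(\ly+\epsilon')}$ via Theorem~\ref{uniformsemicontinuity} and the short ones by $(\|A\|_0+\delta)^\ell$, and absorbs the resulting $O(k)$ prefactor into $e^{k(\ly+\epsilon)}$ by enlarging $k_\epsilon$ — exactly the steps you describe, including the key remark that uniformity in the phase is what licenses applying (\ref{fcocy}) to every shifted subproduct at once.
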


   \begin{proof}
      To bound the left hand side of (\ref{cocyestimate}) we will break it into terms composed of iterates of  cocycles  $A$ and $D$. 
      We obtain this by a standard trick
      \begin{eqnarray}
        A_k(\theta) &=& A(T^{k-1}\theta)\circ A_{k-1}(\theta)\nonumber 
        =\left(A-D\right)(T^{k-1}\theta)\circ A_{k-1}(\theta) + D(T^{k-1}\theta)\circ A_{k-1}(\theta)
      \end{eqnarray}
      and iterate on the last term to retrieve the iterates of $D$
      \begin{eqnarray}
        \| A_k(\theta) - D_k(\theta) \| &\leq& 
        \sum_{0\leq \ell\leq k-1}
        \| D_\ell(T^{k-\ell}\theta)(D-A)(T^{k-1-\ell}\theta)A_{k-1-\ell}(\theta)\|
        \nonumber  \\
        & \leq & \sum_{0\leq \ell\leq k-1} \| D_\ell\|_0
         \|D-A\|_0\|A_{k-1-\ell}\|_0.\nonumber
      \end{eqnarray}
      For small enough $\delta > 0 $  we may apply Theorem
      \ref{uniformsemicontinuity} to both the $A$ and $D$ iterates in
      the last term. Particularly, setting $f(x)=\ln \| A(x)\|$ we
      obtain that
       for  $0 <\epsilon' < \epsilon$
	  there is $\|D-A\|_0<\delta$  and $k({\epsilon'})$ large so
        for $\ell \geq k({\epsilon'})$, \be \label{expl}\|D_\ell(x)\|\leq e^{k\ly(A)+\epsilon'}\ee
     
      We partition the sum accordingly, for $k > 2 k({\epsilon'})$
      \be
       \| A_k(\theta) - D_k(\theta) \| 
	      \leq \label{errorpart}
                \left(  \sum_{0\leq\ell\leq k({\epsilon'}) -1}
		        +\sum_{k({\epsilon'})\leq \ell\leq k - k({\epsilon'}) - 2} 
                +\sum_{k -k({\epsilon'})-1 \leq\ell\leq k-1 }  \right)
                 \| D_\ell\|_0\|D-A\|_0  \|A_{k-1-\ell}\|_0
    \ee
    Applying (\ref{expl}) to all iterates $A_j,D_j$ for $j>
    k({\epsilon'})$, noting that the first and last summands
    consist of $k(\epsilon')$ terms each and that for $\ell<
    k(\epsilon')$ we can bound 
     $\|D_\ell\|_0 \leq (\|A\|_0 + \epsilon)^\ell$, $\|A_{k-1 -
       \ell}\|_0 \leq (\|A\|_0 + \epsilon)^{k - 1- \ell}$ , we obtain
        \beq
         \| A_k(\theta) - D_k(\theta) \| 
          &\leq& \nonumber \delta
                 \sum_{k_{\epsilon'}\leq \ell\leq k - k_{\epsilon'}- 2}
                 \| D_\ell\|_0\|A_{k-1-\ell} \|_0 
                 + 2\delta e^{\{(k-1)(\ly+\epsilon')\}}
                 \sum_{0\leq\ell\leq k_\epsilon-1}  \left(\|A\|+\delta\right)^\ell \\
          &\leq& \nonumber \delta e^{\{(k-1)(\ly+\epsilon')\}}
                 \left( k + 2 \sum_{0\leq\ell\leq k_{\epsilon'}-1}
			     \left(\|A\|+\delta\right)^\ell \right)\\
	      &\leq&\nonumber \delta e^{\{k (\ly+\epsilon)\}}
      \end{eqnarray}
      for large enough $k > k_\epsilon$. \qed
   \end{proof}
   
   \begin{remark}
      A standard argument would easily obtain 
	  (\ref{cocyestimate}) with $\exp\{k(\ly+\epsilon)\}$ replaced by
	  $C\|A\|_0^k$. The issue here is tight control on the exponential 
	  rate of growth of the error, without assuming continuity of $\ly$.
   \end{remark}

\section{H\"older Continuity in Frequency}\label{ear}

   If $I = [u,v]\subset\zee$ we write
   \[ \ham_{I;\theta} = \ham_{[u,v],\theta} := R_I\ham_\theta R_I \]
   where $R_I$ projects onto the subspace of coordinates restricted to $I$.
   The Green's function for the interval is the inverse of the 
   restriction $ G_I(i,j)  = \delta_i^T\ham_{I}^{-1}\delta_j$.
   The determinants of the truncated matrix will be labeled
   $P^E_k(\theta) := \det(\ham_{[0,k-1];\theta}-E).$
   The truncated Hamiltonian relates to the cocycle matrices by the equation
   \begin{equation}\label{cocy2det}
      A^E_k(\theta) = \left[\begin{array}{cc} P^E_k(\theta) & -P^E_{k-1}(\theta+\omega)
      \\ P^E_{k-1}(\theta) & - P^E_{k-1}(\theta+\omega)  \end{array}\right].
   \end{equation}

   The following simple lemma allows to bound $|P_k|$ from above
   uniformly in $\theta$ and for a large measure subset of the spectrum
   \begin{lemma}\label{lmuc}
      For any $\zeta,\eta > 0$ there exists a set 
      $F(\zeta,\eta)\subset S(\omega),$
      $|F(\zeta, \eta)| < \zeta,$ and 
      $ k(\omega, \zeta, \eta) = k_F$ so that
      $E\in S(\omega)\backslash F(\zeta,\eta)$ and $k>k_F$ implies
	  \begin{equation}\label{detest}
         |P^E_k(\theta)| < e^{k\left(\ly(E) +\eta\right)} .
	  \end{equation}
      Furthermore there is some $\zeta_F > 0$
      so that uniform upper convergence in the sense of 
      Corollary \ref{cocyest} holds.
      Thus, $E\in S(\omega)\backslash F(\zeta,\eta)$ implies
      if $\| D - A^E\| < \zeta_F$ and $k > k_F$
      then (\ref{cocyestimate}) holds.
   \end{lemma}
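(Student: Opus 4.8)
The plan is to produce the set $F(\zeta,\eta)$ by combining the pointwise a.e.\ convergence $\frac1k\ln\|A^E_k(\theta)\|\to\ly(E)$ with the uniformity in $\theta$ furnished by Theorem~\ref{furmanT97}, and then to upgrade to uniformity in cocycle perturbations via Theorem~\ref{uniformsemicontinuity} and Corollary~\ref{cocyest}. First I would fix a countable dense sequence of energies, or work directly on $S(\omega)$ with the measure $\mu_{\rm dos}$ (or Lebesgue restricted to $S(\omega)$). For each $E$, Furman's theorem gives a time $n_\epsilon(E)$ after which $\frac1n\ln\|A^E_n(\theta)\|<\ly(E)+\eta/2$ for all $\theta$; by (\ref{cocy2det}) the entries $P^E_k(\theta),\,P^E_{k-1}(\theta)$ are bounded by $\|A^E_k(\theta)\|$, so the same time bound gives (\ref{detest}) with $\eta$ in place of $\eta/2$ and the constant absorbed. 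The only thing that is not uniform across $E$ is the time $n_\epsilon(E)$: that is where the exceptional set enters. By Egorov-type reasoning — $n_\epsilon(\cdot)$ is measurable in $E$ (it is an infimum over $n$ of sets cut out by continuous functions of $(E,\theta)$ after applying $\max_\theta$), and it is finite for every $E$ — the sublevel sets $\{E: n_\epsilon(E)\le N\}$ increase to $S(\omega)$ in measure, so there is $k_F=N$ large enough that the complement $F(\zeta,\eta)$ has measure $<\zeta$. On $S(\omega)\setminus F(\zeta,\eta)$, by construction $k>k_F$ forces (\ref{detest}).

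For the second assertion, I would invoke Corollary~\ref{cocyest} with $X=\tor$, $T=R$ (uniquely ergodic since $\omega$ is irrational), and the continuous cocycle $A^E$. The corollary, applied with $\epsilon'<\eta$, produces for each $E$ a perturbation radius $\delta(E)>0$ and a time $k(E)$ after which $\|A^E_k-D_k\|_0\le\delta e^{k(\ly(E)+\epsilon')}$ whenever $\|D-A^E\|_0<\delta(E)$. Again the issue is uniformity of $\delta(E)$ and $k(E)$ in $E$: I would enlarge $F(\zeta,\eta)$ (still keeping measure $<\zeta$ by the same Egorov argument on the measurable functions $E\mapsto 1/\delta(E)$ and $E\mapsto k(E)$) and correspondingly increase $k_F$, so that on the remaining set one may take a single $\delta_F>0$ and a single $k_F$. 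Thus for $E\in S(\omega)\setminus F(\zeta,\eta)$, $\|D-A^E\|_0<\delta_F$, and $k>k_F$, the estimate (\ref{cocyestimate}) of Corollary~\ref{cocyest} holds.

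The main obstacle — and the only genuine content beyond bookkeeping — is establishing the measurability (or, what suffices, the "almost uniformity") of the threshold functions $E\mapsto n_\epsilon(E)$, $E\mapsto\delta(E)^{-1}$, $E\mapsto k(E)$ so that the Egorov-style step is legitimate. For $n_\epsilon(E)$ this follows because $(E,\theta)\mapsto\|A^E_n(\theta)\|$ is jointly continuous and $S(\omega)$ is compact, so $E\mapsto\max_\theta\frac1n\ln\|A^E_n(\theta)\|$ is continuous and $n_\epsilon(E)$ is a pointwise-finite infimum of a decreasing-in-$n$ family of open conditions, hence upper semicontinuous — in particular Borel. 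The perturbation radius $\delta(E)$ coming out of the proof of Corollary~\ref{cocyest} depends on $E$ only through $\ly(E)$ (which is upper semicontinuous) and through the Furman time $k_{\epsilon'}(E)$ for the cocycle $\ln\|A^E_\cdot\|$; tracking these dependencies shows $\delta(E)$ can be taken lower semicontinuous-in-$E$, so its reciprocal is Borel, and the argument closes. Everything else — the passage from $\|A^E_k\|$ to $|P^E_k|$, absorbing constants into the $\eta$, shrinking/enlarging $F$ by finite unions — is routine.
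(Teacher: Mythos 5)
Your proposal is correct and follows essentially the same route as the paper: apply Furman's theorem and Corollary~\ref{cocyest} pointwise in $E$ to obtain thresholds $k_{E,\eta}$ and $\delta_E$, then discard a small-measure exceptional set of energies via an Egorov-type step, exactly as the paper does when it asserts $|\{E:k_{E,\eta}>k\}|\to 0$ and $|\{E:\delta_E<\delta\}|\to 0$ without further comment. You helpfully fill in the measurability details the paper glosses over; the one small inaccuracy is that, because $\ly$ is merely upper semicontinuous, the condition $\max_\theta\frac1n\ln\|A^E_n(\theta)\|<\ly(E)+\eta/2$ is not open in $E$, so $n_\epsilon(\cdot)$ need not be upper semicontinuous as you claim — but the relevant sublevel sets are still Borel, which is all the Egorov argument actually requires.
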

    \begin{proof}
       For all $E$ there exists $k_{E,\eta}$ and $\zeta_E$ so that
	   Corollary \ref{cocyest} holds.
	   Thus, 
	   \[|\{E: k_{E,\eta}>k\}|\to 0 \textrm{ as }k\to\infty\]
	   and 
	   \[ |\{E: \zeta_E<\delta\}|\to 0 \textrm{ as }\zeta\to 0. \]
	   Therefore, 
	   \[F(\zeta,\eta)=\{E: \zeta_E<\delta\}\cup \{E: k_{E,\eta}>k\}\]
	   for small enough $\zeta$ and large enough $k = k_F$ so 
	   that $|F(\zeta,\eta)|<\zeta$.\qed
	\end{proof}

  \subsection{The general case}
   Here we observe that a result of Avron, Mouche and Simon on
   $1/2$-H\"older continuity of the spectrum easily generalizes from
   $f\in \mathcal{C}^1$ to $\gamma$-H\"older case. 
   \begin{theorem}\label{arc}
          Suppose $f\in\mathcal{C}^\gamma(\tor)$, $1\geq\gamma > 0$.
	  Then for $E \in S(\omega)$ 
	  and for small enough $|\omega - \omega'|$,
	  there exists an $E'\in S(\omega')$ 
	  so that $|E-E'| < C_f |\omega - \omega'|^{\frac{\gamma}{1+\gamma}}$, 
	  for some constant $C_f>0$ not depending on $\omega$ or $\omega'$.
   \end{theorem}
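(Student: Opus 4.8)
The plan is to imitate the Avron--van Mouche--Simon argument for $C^1$ potentials, replacing the mean value estimate $|f(x)-f(y)| \le \|f'\|_\infty |x-y|$ by the H\"older modulus $|f(x)-f(y)| \le C_f |x-y|^\gamma$, and then optimizing a free scale parameter. Fix $E \in S(\omega)$ and $\omega'$ with $|\omega-\omega'|$ small. Since $S(\omega)$ is the closure of $\bigcup_\theta S(\omega,\theta)$ (and, for irrational $\omega$, equals $S(\omega,\theta)$ for every $\theta$), it suffices to produce, for each $\theta$, an approximate eigenvector of $\ham_{\omega',\theta}$ with error controlled by $|\omega-\omega'|^{\gamma/(1+\gamma)}$, uniformly in $\theta$; by a standard argument (the spectrum of a self-adjoint operator is within $\|(\ham-E)\psi\|/\|\psi\|$ of $E$) this gives a point of $S(\omega',\theta) \subset S(\omega')$ within that distance of $E$. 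For the rational approximation one instead compares $\ham_{p/q,\theta}$ on a block of length a multiple of $q$, but the mechanism is the same.

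The concrete step is: take $\psi$ an exact (generalized) eigenvector of $\ham_{\omega,\theta}$, or rather its restriction to a block $I=[-L,L]$ multiplied by a smooth cutoff, and feed it into $\ham_{\omega',\theta}$. The operator difference $\ham_{\omega',\theta}-\ham_{\omega,\theta}$ is diagonal with entries $f(\omega' n+\theta)-f(\omega n + \theta)$, which on the block $|n|\le L$ are bounded by $C_f (L|\omega-\omega'|)^\gamma$. Summing the square of this over the block and accounting for the normalization of $\psi$ restricted to the block gives a contribution of order $C_f L^\gamma |\omega-\omega'|^\gamma$ to $\|(\ham_{\omega',\theta}-E)\psi\|/\|\psi\|$. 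On the other hand, truncating the eigenvector to the block produces a boundary error which for a generic eigenvector is only polynomially small, of order $L^{-1/2}$ after normalization — this is exactly the "$1/2$" that forces $\gamma>1/2$ downstream, but in the present theorem \emph{no} positivity of the Lyapunov exponent is used, so one only gets the polynomial boundary term. Balancing $L^\gamma|\omega-\omega'|^\gamma \sim L^{-1/2}$, i.e. choosing $L \sim |\omega-\omega'|^{-\gamma/(\gamma+1/2)}$ — wait, more carefully balancing the diagonal error $L^\gamma|\omega-\omega'|^\gamma$ against the boundary error $L^{-1}$ (before normalization the boundary defect is $O(1)$ against a block mass $\sim L$, giving relative error $L^{-1}$) yields $L \sim |\omega-\omega'|^{-\gamma/(\gamma+1)}$ and total error $\sim |\omega-\omega'|^{\gamma/(\gamma+1)}$, which is the claimed exponent.

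I would organize the write-up as: (i) reduce to constructing uniform-in-$\theta$ approximate eigenvectors on finite blocks, citing that $S(\omega)$ is $\theta$-independent; (ii) estimate the diagonal perturbation $\|(\ham_{\omega'}-\ham_\omega)R_I\psi\|$ using the H\"older bound and the trivial estimate $|\omega'n+\theta - \omega n - \theta| \le L|\omega-\omega'|$ on the block; (iii) estimate the commutator/boundary term $\|[\ham_\omega, R_I]\psi\|$, which only involves the two edge sites and is $O(1)$ against the block mass $\gtrsim L$ (one must check the eigenvector cannot be too concentrated near the boundary — this uses a Schnol-type / a priori bound, or one simply averages over $L$ in a range, as in \cite{AMS1990}); (iv) optimize $L$. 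The main obstacle is step (iii): controlling the boundary term requires a lower bound on the block mass of the (generalized) eigenvector, i.e. ruling out that all the mass sits near $\pm L$; this is handled exactly as in Avron--van Mouche--Simon by a pigeonhole over a dyadic range of block lengths $L$, and it is the place where the constant $C_f$ and the "small enough $|\omega-\omega'|$" hypothesis enter. Everything else is a routine H\"older-for-Lipschitz substitution followed by the scale optimization.
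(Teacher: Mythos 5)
Your proposal follows the same route as the paper: take the Avron--van Mouche--Simon triangular-cutoff argument, replace the Lipschitz estimate on the diagonal perturbation by the H\"older bound $C_f\left(L|\omega-\omega'|\right)^\gamma$, and optimize $L$ against the $L^{-1}$ boundary term to land on the exponent $\gamma/(1+\gamma)$. One caution on your heuristic: the $L^{-1}$ is \emph{not} an ``$O(1)$ defect against block mass $\sim L$'' --- a sharp restriction $R_I$ really only gives relative error $L^{-1/2}$, which is the $L^{-1/2}$ you initially wrote down; the $L^{-1}$ comes from using the piecewise-linear cutoff $g_{j,L}(n)=\left(1-|j-n|/L\right)^+$ whose slope is $1/L$, so that $\left[\ham,g_{j,L}\right]$ has norm $O(L^{-1})$ on the block, combined with the pigeonhole over centers $j$ to guarantee $\|g_{j,L}\psi\|$ is comparable to the block mass of $\psi$ --- precisely the AMS inequality the paper quotes as its equation (\ref{ams}).
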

   
   Note that by $C_f$ we mean a constant that depends only on $f.$ 
   Different such constants are denoted by the same $C_f$ in the proofs below.
   The proof is very similar to that of \cite{AMS1990}.  
   Starting with an approximate eigenfunction for $\ham_{\omega,\theta} -E$ 
   and using the same test function as in \cite{AMS1990}, upon a cutoff at
   a distance $L$ we obtain an approximate eigenfunction for 
   $\ham_{\omega',\theta'} $ with an error in the kinetic energy 
   of order $L^{-1}.$  
   The main difference is that the potential energy error is now 
   bounded by $CL|\omega-\omega'|^\gamma,$ so the choice of $L$ 
   is optimized by $L=C_f|\omega-\omega'|^{-\frac{\gamma}{1+\gamma}}.$ 

   More precisely, given $\epsilon >0$ and $E\in S(\omega),$ there exists an 
   approximate eigenfunction $\phi_\epsilon\in\ell^2(\zee)$ so that 
   $\| (\ham_{\omega,\theta} - E)\phi_\epsilon\| < \epsilon\|\phi_\epsilon\|$. 	  
   Set $g_{j,L}(n) = \left(1 - \frac{|j-n|}{L}\right)^+,$ 
   where $g^+(n)=g(n), n\geq 0$ and  $g^+(n)=0$ otherwise.

   Avron-van Mouche-Simon \cite{AMS1990} prove that for sufficiently 
   large $L$ for {\it any} bounded $f:\tor\to\bbr$  there exists $j$ 
   such that $g_{j,L}\phi_\epsilon\not= 0$,and for any $\epsilon > 0,$
   \begin{equation}\label{ams}\|(\ham_{\omega,\theta} - E)g_{j,L}\phi_\epsilon\|^2
       \leq C\left(\epsilon^2
         + L^{-2}\right)\|g_{j,L}\phi_\epsilon\|^2, 
   \end{equation}
   where $C$ is universal.
   Now let $\theta'$ be given by $\omega j +\theta = \omega' j + \theta'$.
   By the H\"{o}lder assumption on $f$ and $j-L\leq n\leq j+L$, observe that
   \[ |f(\theta + n\omega) - f(\theta' + n\omega')| 
	             \leq C_f\left(L|\omega - \omega'|\right)^\gamma\]
   Thus,
   \begin{eqnarray} 
       \|(\ham_{\omega',\theta'} - E)g_{j,L}\phi_\epsilon\|
           & \leq & \nonumber
                    \|(\ham_{\omega',\theta'} - \ham_{\omega,\theta} )g_{j,L}\phi_\epsilon\|
                   + \|(\ham_{\omega,\theta} - E)g_{j,L}\phi_\epsilon\|\\
           &\leq& \label{app}
                  \left( C_f\left(L|\omega - \omega'|\right)^\gamma + C\left(\epsilon^2
	          + L^{-2}\right)^{1/2}\right)\|g_{j,L}\phi_\epsilon\|.
    \end{eqnarray} 	    
	
    Since $\epsilon$ can be arbitrarily small, choosing 
     $L=C_f |\omega-\omega'|^{-\frac{\gamma}{1+\gamma}},$ to make both 
    addends on the right-hand side of (\ref{app}) equal,  we obtain the 
    statement of Theorem~\ref{arc} by the variational principle. \qed

\subsection{Diophantine case}

   \label{edr}
   As discussed in detail in \cite{AMS1990} (the last section), for Diophantine rotations
   $1/2$-H\"older continuity of the spectrum (the best that can be
   obtained from Theorem~\ref{arc}) is not sufficient, so  that is
   what we aim to improve.

  \begin{theorem}\label{drc}
      Suppose $\ham_{\omega,\theta}$ is an operator of the form (\ref{schrodinger1})
      where $f\in\mathcal{C}^\gamma$, $1 \geq \gamma > 0,$ $\omega\in
      [0,1]$ is $\kappa$-Diophantine, $\kappa>0.$  Fix $0<\beta <\gamma.$
      Given $\zeta>0$ there is a $B_\zeta$,
      $0 < |B_\zeta| < \zeta$ so that for 
      $E\in S(\omega)\cap L_+(\omega)\backslash B_\zeta$ and any $\omega'$ near $\omega$,
      there exists $E'\in S(\omega')$ such that
      \[ |E - E'| < C_f |\omega - \omega'|^\beta .\]
   \end{theorem}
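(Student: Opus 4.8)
The plan is to quantify the strategy of \cite{JK2001}: rather than apply the tent cut-off of \cite{AMS1990} to an approximate eigenfunction — which, as in the proof of Theorem~\ref{arc}, produces only a kinetic-energy error of order $L^{-1}$ — I would take a \emph{sharp} cut-off, at two sites roughly $L$ apart, of a generalized solution of $\ham_{\omega,\theta}u=Eu$, placing the cut-off sites where positivity of $\ly$ makes the solution \emph{polynomially} small in $L$; this upgrades the error to $L^{-s}$ for any prescribed $s$. Fix a small $\eta>0$ and take $B_\zeta:=F(\zeta,\eta)$ from Lemma~\ref{lmuc}, enlarged by a further set of arbitrarily small Lebesgue measure (absorbing the energies for which the convergence rates below are too slow), so that for $E\in S(\omega)\cap L_+(\omega)\setminus B_\zeta$ one has $|P^E_k(\theta)|<e^{k(\ly(E)+\eta)}$, hence $\|A^E_k(\theta)\|\le Ce^{k(\ly(E)+\eta)}$ and (as $\det A^E\equiv1$) $\|A^E_k(\theta)^{-1}\|\le Ce^{k(\ly(E)+\eta)}$, uniformly in $\theta$ for $k\ge k_F$, and, for a uniform $\delta_F>0$, the cocycle-approximation bound of Corollary~\ref{cocyest} holds for all $D$ with $\|D-A^E\|_0<\delta_F$. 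The whole argument then reduces to the claim: for $E$ as above, every sufficiently large integer $L$, and every $s>0$, there exist a phase $\theta$, an interval $I=[j^-,j^+]\subset\zee$ with $L\le|I|\le 2L$, and a nonzero $\psi$ supported in $I$ with $\|(\ham_{\omega,\theta}-E)\psi\|\le L^{-s}\|\psi\|$; moreover, by Lemma~\ref{JitoLemma}, such a window can be obtained with $\theta$ prescribed up to an $O(L)$ shift along the $\omega$-orbit.

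Granting this, the passage to $\omega'$ is routine. Given $\omega'$ near $\omega$, let $c$ be the center of $I$ and pick $\theta'$ with $\theta'+c\omega'=\theta+c\omega$; then $\|(n-c)(\omega'-\omega)\|\le L|\omega-\omega'|$ for $n\in I$, so the H\"older bound on $f$ gives $\|(\ham_{\omega',\theta'}-\ham_{\omega,\theta})\psi\|\le C_f(L|\omega-\omega'|)^\gamma\|\psi\|$, whence by the variational principle
\[
\dist\bigl(E,S(\omega')\bigr)\le\dist\bigl(E,S(\omega',\theta')\bigr)\le L^{-s}+C_f\bigl(L|\omega-\omega'|\bigr)^\gamma .
\]
Taking $L\asymp|\omega-\omega'|^{-\gamma/(s+\gamma)}$ balances the two terms and yields $\dist(E,S(\omega'))\le C_f|\omega-\omega'|^{s\gamma/(s+\gamma)}$; since $s\gamma/(s+\gamma)\to\gamma$ as $s\to\infty$, choosing $s>\beta\gamma/(\gamma-\beta)$ makes the exponent exceed $\beta$, which is the conclusion of Theorem~\ref{drc}. (This uses the window claim precisely at the scale $L$ determined by $\omega'$, which is why it is needed at all large integer scales.)

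The hard part — and the step I expect to be the genuine obstacle — is the window claim: producing, at scale $L$, cut-off sites $j^\pm$ with $j^+-j^-\asymp L$ and a solution $u$ of $\ham_{\omega,\theta}u=Eu$ that is polynomially (in $L$) small at $j^-$ and $j^+$ relative to its maximum on $[j^-,j^+]$, \emph{without} establishing localization (so $u$ need not lie in $\ell^2(\zee)$) and \emph{without} continuity of $\ly$. The mechanism is that $E\in S(\omega)$ forces the Schr\"odinger cocycle $A^E$ to fail uniform hyperbolicity, while $\ly(E)>0$; combining the resulting non-uniform hyperbolicity with the uniform upper bounds of Lemma~\ref{lmuc} — transported to the relevant nearby energies via Corollary~\ref{cocyest}, which is exactly where the lack of continuity of $\ly$ has to be circumvented — one extracts a window over which $u$ rises from a polynomially small boundary pair $\binom{u(j^-)}{u(j^--1)}$ to an interior maximum and falls back to a polynomially small $\binom{u(j^++1)}{u(j^+)}$. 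For such $j^\pm$ the sharp cut-off $\psi=\chi_I u$ (with $I=[j^-,j^+]$) has $(\ham_{\omega,\theta}-E)\psi$ supported in $\{j^--1,j^-\}\cup\{j^+,j^++1\}$ and of norm $\le CL^{-s}\|\psi\|$, as needed. The Diophantine hypothesis enters twice: via Lemma~\ref{JitoLemma}, which guarantees that good windows recur along the orbit of any prescribed phase within $O(L)$ steps (so $\theta$ can be coordinated with the comparison to $\omega'$ above), and via the recurrence rate $q_{n+1}\le Cq_n^{1+\kappa}$ of \eqref{dioph2}, which keeps the interplay between the scale $L$ and the gain $L^{-s}$ controlled uniformly in $\kappa$, so that $\beta$ can be any number below $\gamma$ irrespective of the Diophantine class; the energies for which these rates are realized too slowly are precisely those thrown into $B_\zeta$, keeping $|B_\zeta|<\zeta$.
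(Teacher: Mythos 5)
Your high-level strategy is the right one and matches the paper's: a sharp cut-off of a generalized eigenfunction at sites where positivity of $\ly$ forces polynomial smallness, followed by a balance between the cut-off error $L^{-s}$ and the H\"older error $(L|\omega-\omega'|)^\gamma$; your use of Lemma~\ref{lmuc} and Corollary~\ref{cocyest} to sidestep discontinuity of $\ly$, and of Lemma~\ref{JitoLemma} to coordinate the window with a prescribed phase, are exactly the right ingredients. The final arithmetic $L\asymp|\omega-\omega'|^{-\gamma/(s+\gamma)}$ and the observation that $s\gamma/(s+\gamma)\to\gamma$ is also a correct paraphrase of what the paper achieves by its more elaborate choice of parameters $\tau,\varsigma,a,b,c,d,\eta$ in (\ref{partauvarsigmab})--(\ref{pareta}).

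However, the proposal has a genuine gap precisely at the step you flag as ``the hard part''. You reduce everything to the \emph{window claim}: for $E\in S(\omega)\cap L_+(\omega)\setminus B_\zeta$, every large $L$, and every prescribed $s$, there is a window $I$ of length $\asymp L$ and a phase $\theta$ on which a generalized solution has polynomially small boundary values relative to its interior. Your proposed mechanism — ``non-uniform hyperbolicity, since $E\in S(\omega)$, combined with the uniform upper bounds of Lemma~\ref{lmuc}'' — does not by itself locate such a window, nor does it explain where the $\gamma$-H\"older regularity of $f$ enters the cocycle argument (as opposed to merely the potential-energy estimate at the end). The paper's actual mechanism is specific and cannot be elided: it approximates $f$ by the Fej\'er trigonometric polynomial $f_N$ with $\|f_N-f\|_0\le C_fN^{-\gamma}$ (this is the first place $\gamma$-H\"older is used), chooses $N=e^{\chi k\tau/\gamma}$ so that the resulting cocycle matrices $A^{(N),E}_k$ are trigonometric polynomials of degree $\le 4kN$, and then uses the inclusion chain (\ref{incls}) together with the measure bound (\ref{lbound}) to conclude that the level set $V_k(b\ly(E),A^{(N),\bar E})$ has measure $\ge\frac12$ and — being the superlevel set of a trigonometric polynomial of controlled degree — must contain an \emph{interval} of length $\ge e^{-\chi k\varsigma/\gamma}$. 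It is only this interval that Lemma~\ref{JitoLemma} can hit with the orbit of $\theta$ within $q_n+q_{n-1}$ steps, and only then does one have the large denominators $|P^{E_0}_{k_\iota}|>\frac14 e^{c\ly(E)k}$ needed in Cramer's rule (\ref{cramer2}) to make the Green's function, hence the boundary values of $\psi$, polynomially small. Without the passage through a trigonometric-polynomial cocycle, a measure-$\frac12$ subset of $\tor$ need not contain any interval at all, and Lemma~\ref{JitoLemma} gives you nothing. In short: your sketch correctly identifies all the supporting lemmas, but the core idea that makes the window claim true — trading H\"older regularity for polynomial approximation in order to convert a measure estimate into an interval estimate — is absent, and your fallback explanation via non-uniform hyperbolicity does not replace it.
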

   \begin{remark}
      The theorem holds for
      $\gamma > \beta > 0$, but the application we are interested in 
      will require $\gamma > \beta > \frac{1}{2}$.
   \end{remark}
   \begin{proof} We assume $L_+(\omega)\cap S(\omega)\neq\emptyset$
      otherwise the Theorem holds vacuously.
      Suppose $f$ is $\gamma$-H\"{o}lder. Let $0 < \beta < \gamma$.
      Let $\mathcal{E}_\chi = \{E\in S(\omega)\cap L_+(\omega): \ly(E) < \chi\}$, 
      with $\chi > 0$ so small that $|\mathcal{E}_\chi |< \frac{\zeta}{2}$.
      By upper semicontinuity of the Lyapunov exponent, the 
      Lyapunov exponent is bounded on compact sets. Let $\bar{\chi}>0$
      be an upper bound of the Lyapunov exponent on $S(\omega)$.
      Let $1> c>\frac{3}{4}.$ Choose $d$ so that $c-\frac{1}{2} > d > \frac{1}{4}$.
      Choose 
      \begin{equation}\label{partauvarsigmab}
         0 < \tau < \varsigma <
  	  \frac{\gamma - \beta}{\beta +1 -\frac{\beta}{\gamma}}\frac{d}{(1+2\kappa)};\;
           1 > b > \max (1
                  -\frac{\chi}{\bar{\chi}}\tau,c)\textrm{ and } b<a<1.
       \end{equation}
       Finally, let $\eta>0$ be such that 
       \begin{equation}\label{pareta}
          0 < \eta < \min \{\chi\tau - \bar{\chi}(1- b),
             {\chi}{(1-a)}, \chi (c-d-\frac{1}{2}) \}.
       \end{equation}
       Define
       $B_\zeta = \mathcal{E}_\chi \cup F(\zeta/2 , \eta)$
       with $F(\cdot,\cdot)$
       from Lemma \ref{lmuc} with associated $k_F$ and $\delta_F$.
       Take $E\in S(\omega)\cap L_+(\omega)\backslash B_\zeta$.
       We now find an $N$th degree trigonometric polynomial $f_N$ that approximates $f.$ 
       Namely, for $\gamma$-H\"{o}lder functions $f$, we have
       \[\|f_N - f\| < C_f N^{-\gamma}        \]
       where 
       \[  f_N(\theta) :=
           K_N*f(\theta) =
           \sum_{-N\leq j\leq N} 
           \left(1 - \frac{|j|}{N+1}\right)\hat{f}(j)e^{ij\theta}, \]
       $K_N$ being Fejer's summability kernel, 
%
       see for example \cite{katznelson}.

       Set
       \[ N = \exp\left\{ \chi k\frac{\tau}{\gamma} \right\}\]
       and let $A^{(N),E}$ be the cocycle matrix defined by 
       the potential determined by the sampling function $f_N$.
 
       For a 
       map $B:\tor\to SL_2(\real)$ and associated cocycle,
        \begin{equation}\label{vsets}
	    V_k\left(t,B\right) = 
		\left\{\theta\in\tor:\frac{1}{k}\ln\|B_k(\theta)\| > t\right\}\subset\tor.
        \end{equation}
        The measure of this set, for large $k,$  can be bounded 
        below by use of Corollary \ref{cocyest}. 
        Indeed, for $k > k_F$ we have for all $\theta$,
        $\frac{1}{k}\ln\|A^E_k(\theta)\| < \ly(E)+\eta$,
        thus using (\ref{avginf}) and (\ref{fcocy}),
        \begin{equation}\label{lebound1}
	     \ly(E) \leq \int_\tor \frac{1}{k}\ln\|A^E_k(\theta)\|\de\theta 
	     \leq |V_k\left(a\ly(E),A^E\right)|(\ly(E)+\eta) 
		       + \left(1 - \left|V_k\left(a\ly(E),A^E\right)\right|\right)a\ly(E),
        \end{equation}
        the lower bound on the measure of $V_k$ follows immediately, 
        \be\label{lebound}
         \frac{(1 - a)\ly(E)}{(1-a)\ly(E) + \eta} \leq \left|V_k\left(a\ly(E),A^E\right)\right|.
        \ee
        Furthermore, we make the following claim regarding the sets $V_k(\cdot,\cdot)$ 
        for $k > k_G = \max\left\{k_F,k_{a,b,c}\right\}$, and 
        $|E - \bar{E}| ,|E - \bar{\bar{E}}| < \exp\{- \chi \tau k\}$,
        \begin{equation}\label{incls}
           V_k(a\ly(E),A^E) \subset V_k(b\ly(E),A^{(N),\bar{E}}) \subset
           V_k(c\ly(E),A^{\bar{\bar{E}}}).
        \end{equation}
        The left inclusion of (\ref{incls}) follows from the approximation,
        \begin{eqnarray}\nonumber 
	     \theta&\in& V_k \left(a\ly(E),A^E\right)
		      \implies \\ 
              \left\|A^{(N),\bar{E}}_k(\theta)\right\|
	          &\geq&\nonumber \left\|A^{E}_k(\theta)\right\| -
                \left\|A^{E}_k(\theta) - A^{(N),\bar{E}}_k(\theta)\right\|  \\
          &>&\nonumber
               e^{ak\ly(E)} - \left(\left|E - \bar{E}\right|+ C_f N^{-\gamma}\right)e^{k(\ly(E)+\eta)}
		  > e^{ak\ly(E)} - Ce^{k(\ly(E)+\eta - \chi\tau)} > e^{bk\ly(E)}.
        \end{eqnarray}
        The second  inequality follows from the definition of $\ref{vsets}$ and an
        application of Corollary \ref{cocyest},  
        the next inequality is immediate from the choice of $\bar{E}$ and $N$,
        finally the by the choice of parameters in (\ref{pareta})
        we have $\ly(E) + \eta - \tau \chi < b\ly(E)$ so that the final inequality holds.
        The right inclusion of (\ref{incls}) is similar, with comparisons
        (applications of Corollary \ref{cocyest}) made to $A^E$,
	\begin{eqnarray}\nonumber
	     \theta&\in& V_k \left(b\ly(E),A^{(N),\bar{E}}\right)\implies \\
             \left\|A^{\bar{\bar{E}}}_k(\theta)\right\|
             &\geq&\nonumber
                \left\|A^{(N),\bar{E}}_k(\theta)\right\| 
                      - \left\|A^{(N),\bar{E}}_k(\theta) - A^{E}_k(\theta)\right\|
                      - \left\|A^{E}_k(\theta) -  A^{\bar{\bar{E}}}_k(\theta)\right\| \\
             &>&\nonumber e^{bk\ly(E)} - (C_fN^{-\gamma} +
              \left|\bar{E} - E\right| + \left|E -\bar{\bar{E}}\right|)e^{k(\ly(E)+\eta)}\\
             &>&\nonumber e^{bk\ly(E)} - Ce^{k(\ly(E)+\eta - \chi\tau)} > e^{ck\ly(E)},
        \end{eqnarray}
	  again using (\ref{pareta}) to obtain the final inequality.
          Using the inclusion (\ref{incls}) and the lower bound on measure (\ref{lebound}) we have
          \begin{equation}\label{lbound}
            \left|V_k(b\ly(E),A^{(N),\bar{E}})\right| 
              \geq \frac{\chi}{\chi + \eta/(1-a)} \geq \frac{1}{2},
          \end{equation}
          with the final inequality following from (\ref{pareta}).
	  Thus $ V_{k}(b\ly(E),A^{(N),E})$,
	  being defined by a polynomial of order $4k\exp\{\chi k\tau/\gamma\}$,
	  contains an interval of length $\exp\{-\chi k\varsigma/\gamma\}$,
	  for sufficiently large $k$.
          It follows from (\ref{incls}) that $V_k(c\ly(E),A^{\bar{E}}) $
	  also contains an interval $I$ of length $\exp\{- \chi k\varsigma/\gamma\}$.

          Now we move on to constructing the approximate eigenfunction.
          Let $E_0$ be a generalized eigenvalue of $\ham_{\omega,\theta}$
	  so that $|E-E_0| < e^{-(\bar{\chi} +\eta)k}$,
	  with generalized eigenvector $\psi$.
          For spectrally a.e. $E,$
	  $|\psi(x)| = o( (1 + |x|)^{1/2+\epsilon})$ (known as
          Schnol's Theorem, see for example \cite{K07}), 
          so we assume $E_0$ is such a value. 
          Thus there exists an $x_m$ so that 
          \[ \frac{|\psi(x_m)|}{|x_m| + 1} = \max_x\left( \frac{|\psi(x)|}{|x| + 1} \right)
                \geq  \frac{|\psi(x)|}{|x| + 1} \]
          for all $x\in\zee$.
	  Let $\psi$ be normalized so that,
	    \[ \frac{|\psi(x_m)|}{|x_m| + 1} = 1 .\]

          The sublinear growth property together with the convergence properties of cocycles we have
          discussed forces $\psi$ to take on small values at
          controlled distances, allowing as to make an effective
          cutoff, as we will now show.
          Using the Diophantine property (\ref{dioph2})
          for $\omega$, we find a denominator of an approximant $q_n$ such that
	  \begin{equation}\label{diophqn}
	      |I|^{-1}  <   \exp\{ k\chi\varsigma/\gamma\} 
		   \leq q_n < \exp\left\{k\chi\varsigma(1+2\kappa)/\gamma\right\}.
	  \end{equation}
where $I \subset V(c\ly(E),A^E)$ is an interval discussed in the reasoning after (\ref{lbound}).
          Using Lemma \ref{JitoLemma}, applied to the interval  $I$ 
	  there exists an $x'_1$, with $x_m - 2q_n - k\leq x'_1 < x_m - k$,
          so that $T^{x'_1}\theta \in I \subset V(c\ly(E),A^E)$. 
	  Similarly, there exists $x'_3$, with $x_m  < x'_3 \leq x_m  + 2q_n$,
          such that $T^{x'_3}\theta \in I$. 
          The need for an upper bound on $q_n$ will arise later.
          The lower bound on the norm of $A^E$ at $T^{x'_1}\theta $
          (that follows from (\ref{vsets}) implies by the form of the cocycles of
          $A^E$ in (\ref{cocy2det}) that for
	  $x'_1 = x_1\textrm{ or } x_1 = x_1' - 1$ and $k_\ell= k, k-1,\textrm{ or } k-2$ ,
	  we have
	  \[ |P_{k_\ell}^{E_0}(T^{x_1}\theta)|  > \frac{1}{4}e^{c\ly(E)k}. \]
          Similarly for $x_3 = x_3'$ or $x_3 = x_3'-1$ and $ k_r = k, k-1,\textrm{ or } k-2$ 
           \[ |P_{k_r}^{E_0}(T^{x_3}\theta)| > \frac{1}{4}e^{c\ly(E)k}. \]
	  Let \[x_\ell = x_1 + \left[ \frac{k_\ell}{2}\right];\hspace{.1in}
	        x_r = x_3 + \left[ \frac{k_r}{2}\right]. \]
	  Set also $x_2 = x_1 + k_\ell-1$ and $x_4 = x_3 + k_r-1$.
	  Using Cramer's rule, as in \cite{J99} 
          \begin{equation}\label{cramer2}
	    | G_{[x_1,x_2]}^{E_0}(x_\ell,x_1) | 
	    < \frac{|P^{E_0}_{(x_{2}-x_\ell)}(T^{x_\ell+1}\theta)|}
                         {|P^{E_0}_{k_\ell}(T^{x_1}\theta)|}
	    < C\frac{\left(1 + \exp\left\{-(\bar{\chi}+\eta)k_\ell\right\}\right)
		         \exp\left\{(\ly(E)+\eta)\frac{k_\ell}{2}\right\}  }
                               {\exp\left\{ c\ly(E)k_\ell\right\}}
	           <\exp\left\{ -dk_\ell\ly(E) \right\}   
	  \end{equation}		
          similarly,
          \be\label{cramer2b}
           | G_{[x_3,x_{4}]}^{E_0}(x_r,x_3) | <  \exp\left\{ -dk_r\ly(E) \right\} 
          \ee
	  with the numerator in the second inequality bounded above with (\ref{cocyestimate}) 
	  and the last inequality following from (\ref{pareta}) for sufficiently large $k$.
	  For similar reasons  (\ref{cramer2}) also holds with
          $(x_\ell,x_1)$ replaced with $(x_\ell-1,x_1)$, $(x_\ell-1,x_2)$ 
          or $(x_\ell,x_2)$ and 
       (\ref{cramer2b})  holds with $(x_r,x_3)$  replaced by $(x_r-1,x_3)$, $(x_r-1,x_4)$ 
          or $(x_r,x_4)$. 
	  Let $\Lambda = [x_\ell,x_r]$ and let $\psi_\Lambda$ be the 
	  truncation of $\psi$ to $\Lambda$ or 
	  $\psi_\Lambda = R_\Lambda \psi$.
	  We have  $|\Lambda| \leq 4 q_n +k+2 <
          5\exp\left\{k\chi\frac{\varsigma(1+2\kappa)}{\gamma}\right\}$ 
          by the upper bound of $q_n$, (\ref{diophqn}).
	  By choice of $x_m$, and with $x_a = x_r$ or $x_\ell$,
	  \begin{eqnarray}\label{varpsi} 
	       \frac{|\psi(x_a)|}{|x_m| + 1} 
	      & = &\frac{|\psi(x_a)|}{|x_m| + 1}\frac{|x_a|+1}{|x_a|+1}
             \leq \frac{|x_a|+1}{|x_m|+1}\\ 
	     &\leq& \nonumber
                    \frac{|x_m| + |x_a- x_m|+1}{|x_m|+1}
                    \leq 1+|x_a-x_m|
           \leq 1+2q_n+k/2 < 3\exp\left\{k\chi\frac{\varsigma(1+2\kappa)}{\gamma}\right\}. 
	  \end{eqnarray}
          As a formal eigenfunction, $\psi$ satisfies, for $x_1 \leq x \leq x_2$,
	  \begin{equation}\label{greeneq}
	   \psi(x) = - G^{E_0}_{[x_1,x_2]}(x,x_1)\psi(x_1-1) 
                      - G^{E_0}_{[x_1,x_2]}(x,x_2)\psi(x_2+1), 
	  \end{equation}
	  and similarly for $x_3,x_4$.
	  Applying both (\ref{varpsi}) and (\ref{cramer2}) to (\ref{greeneq}) we obtain
          bound at an
	  end point of $\Lambda$,
	  \begin{equation}\label{epb}
	     \psi(x_\ell)
	       \leq  C(|x_m|+1)
               \exp\left\{k\chi\frac{\varsigma(1+2\kappa)}{\gamma}\right\}
               \exp\left\{-kd\ly(E)\right\}\\
	  \end{equation}
          A similar bound follows for $ \psi(x_\ell -1),$ and following the same reasoning
          on $[x_3,x_4]$ and using (\ref{cramer2}) we have similar bounds for 
          $\psi(x_r)$ and $\psi(x_r+1)$.
	  The cutoff function then satisfies,
	  \[ \left\|(\ham_{\omega,\theta} - E_0)\psi_\Lambda\right\| 
	       \leq C(|x_m| +1)
		 \exp\left\{- k\left(d\ly(E)-\chi\frac{\varsigma(1+2\kappa)}{\gamma}\right) 
              \right\}  \]
	  Define $\phi_\Lambda = \psi_\Lambda/\|\psi_\Lambda\| $.
          By the normalization of $\psi$, we have $\|\psi_\Lambda\| \geq |x_m| +1 \geq 1$ so that
	  \[ \left\|(\ham_{\omega,\theta} - E_0)\phi_\Lambda\right\|
	  \leq \frac{1}{|x_m| +1}\left\|(\ham_\omega - E_0)\psi_\Lambda\right\|
	  \leq  C\exp\left\{- k\left(d\ly(E) - \chi\frac{\varsigma(1+2\kappa)}{\gamma}\right) 
                          \right\}.  \]
	  For $\omega'\in\tor$ set
	  $\theta' = \theta -\frac{x_r + x_\ell}{2}(\omega - \omega') $.
	  Then, perturbing the Hamiltonian's frequency,
	  \begin{equation}\label{phfreq}
	  \left\|\left(\ham_{\omega,\theta} - \ham_{\omega',\theta'}\right) \phi_\Lambda \right\|
	          \leq \max_{x_\ell \leq x \leq x_r} | f(\theta' +x\omega') - f(\theta +x\omega)|
			  \leq C_f\left(|\Lambda|\cdot|\omega'-\omega|\right)^{\gamma}
			  <C_f |\omega' - \omega|^\gamma\exp\{ k\chi\varsigma(1+2\kappa) \}
	  \end{equation}
	 Thus
	  \begin{eqnarray}\label{asfpo}
	     \left\|\left(E - \ham_{\omega',\theta'}\right)\phi_\Lambda\right\|& \leq& 
		 |E-E_0|
		 +\left\|(E_0 - \ham_{\omega,\theta})\phi_\Lambda\right\|
	     +\left\|\left(\ham_{\omega,\theta} - \ham_{\omega',\theta'}\right) \phi_\Lambda\right\| \\
		 & \leq &\nonumber |E-E_0|
		   + C\exp\left\{- k\left(d\ly(E) - \chi\frac{\varsigma(1+2\kappa)}{\gamma}\right) \right\} 
		   +C_f |\omega' - \omega|^\gamma\exp\{ k\chi\varsigma (1+2\kappa) \}\\
		 & \leq & 
		     C \exp\left\{- k\left(d\ly(E) - \chi\frac{\varsigma(1+2\kappa)}{\gamma}\right) \right\} 
		   +C_f |\omega' - \omega|^\gamma\exp\{ k\chi\varsigma(1+2\kappa) \}.\label{asfpo3}
	  \end{eqnarray}
Thus, by the variation principle, there exists an $E'$ in $S(\omega')$ 
	  so that
	  \begin{equation}\label{varenergy}
	     |E' - E| \leq 
		 \left\|\left(E - \ham_{\omega',\theta'}\right)\phi_\Lambda\right\|.
	  \end{equation}
	If we take $k > k_G$  such that 
	  \[ \frac{-\beta\ln|\omega - \omega'|}
               {\chi\left(d - \frac{\varsigma}{\gamma}(1+2\kappa)\right)}
	      \leq k \leq
		  \frac{-(\gamma - \beta)\ln|\omega - \omega'|}{\chi\varsigma(1+2\kappa)},
		  \]
        which we can do, by (\ref{partauvarsigmab}), for sufficiently small
        $|\omega - \omega'|,$ we obtain $|E'-E| < |\omega -\omega'|^\beta.$ 
        The required smallness of  $|\omega - \omega'|$ depends only on chosen
         parameters, therefore on $\omega$ (through its Diophantine parameters), 
        $\beta,\zeta$ and $f.$\qed
   \end{proof}

\section{The strong continuity. Proof of Theorem~\ref{main2}}\label{com}
   This argument is very similar to that of \cite{JK2001} 
   (which in turn is a modification of the proof in \cite{L1993}). 
   First, continuity of $S(\omega)$ in Hausdorff metric \cite{AS1983} implies
   \begin{equation} \label{up}
      \limsup_{\frac{p}{q} \to \omega} S\left(\frac{p}{q}\right) 
            \subseteq \Sigma(\omega)  ~\mbox{,}
   \end{equation}
    for any irrational $\omega \in \mathbb{T}$ (inclusion holds set-wise, 
    not just a.e,  for any continuous $f$ and any sequence $\frac{p}{q} \to \omega$), which immediately implies 
    the corresponding inclusion in Theorem~\ref{main2}. 
    For the opposite inclusion we need to consider continued fraction 
    approximants $\frac{p_n}{q_n}.$ Note that because of continuity in $\theta$,
    the set $S(p_n/q_n)$ consists of at most $q_n$ disjoint intervals, say
     $S(p_n/q_n)=
       \cup_{i=1}^{{q'}_n}[a_{n,i},b_{n,i}]$, ${q'}_n\le q_n$.

    We now  treat Diophantine and non-Diophantine cases separately.

   For a Diophantine $\omega,$ Theorem~\ref{drc} implies that 
   for $n>n(\omega,\beta,\zeta,f),$  
   \[S(\omega)\cap L_+(\omega) \subset 
      \cup_{i=1}^{{q'}_n}
      [a_{n,i}-C_f|\omega-\frac{p_n}{q_n}|^\beta,
       b_{n,i}+C_f|\omega-\frac{p_n}{q_n}|^\beta] \cup B_\zeta\]
   thus
    \[|(S(\omega)\cap L_+(\omega)\backslash B_\zeta)\backslash S(p_n/q_n)| 
        < 2C_fq_n|\omega-\frac{p_n}{q_n}|^\beta\to0\]
   since $\beta>1/2.$

   Therefore, for every $\zeta>0,$ we have 
    $|S(\omega)\cap L_+(\omega)\backslash B_\zeta)\backslash
      \liminf _{p_n/q_n \to \omega}S(p_n/q_n)|=0.$
   Thus
   $$|S(\omega)\cap L_+(\omega)\backslash \cap_{\zeta>0}B_\zeta)
        \backslash \liminf _{p_n/q_n \to \omega}S(p_n/q_n)|=0,$$ 
   which gives the desired inclusion in Theorem~\ref{maint}.

      Now, consider the irrational $\omega$ so that there exists
	  a sequence of rational $\frac{p_n}{q_n}$ so that $p_n$ and $q_n$
	  are mutually prime and 
	  \begin{equation}\label{ucfr}
	     q_n^{\frac{1 + \gamma}{\gamma}}\left|\omega - \frac{p_n}{q_n}\right|\to 0,
	  \end{equation}
   so that, $\omega$ is not $\kappa$ Diophantine for
      $\kappa >\frac{1}{\gamma} -1$.
      Similar to the above calculation, we have, letting
      $S\left(\frac{p_n}{q_n}\right) = 
      \cup_{1\leq i \leq q_n'}[a_{n,i},b_{n,i}]$,
      and using Theorem \ref{arc} that
      \[ S(\omega) \subset \bigcup_{1\leq i \leq
      q_n'}\left[a_{n,i}-C_f\left|\omega-\frac{p_n}{q_n}\right|^\frac{\gamma}{1+\gamma},
      b_{n,i}+C_f\left|\omega -\frac{p_n}{q_n}\right|^\frac{\gamma}{1+\gamma}\right]   \]
      Thus, by (\ref{ucfr}),
      \[ S(\omega) \subset \liminf_{p_n/q_n \to \omega}S\left(\frac{p_n}{q_n}\right)  .\qed\]

\bibliographystyle{plain}	
\bibliography{noterefs}

\end{document}